\documentclass[11pt]{article}
\usepackage{indentfirst}
\usepackage{float}

\usepackage{tikz}
\usetikzlibrary{shapes, fit, backgrounds, calc, positioning}
\usetikzlibrary{arrows.meta,decorations.pathreplacing}

\usepackage[utf8]{inputenc}
\usepackage{amsmath}
\usepackage{amssymb}
\usepackage{amsthm}
\usepackage{cite}
\usepackage{booktabs}
\usepackage{xcolor}

\newtheorem{theorem}{Theorem}

\newtheorem{lemma}{Lemma}
\newtheorem{example}{Example}

\newcommand\blfootnote[1]{%
  \begingroup
  \renewcommand\thefootnote{}\footnote{#1}%
  \addtocounter{footnote}{-1}%
  \endgroup
}

\newcommand{\Fq}{\mathbb{F}_q}
\newcommand{\F}{\mathbb{F}}
\newcommand{\cE}{\mathcal{E}}
\newcommand{\cH}{\mathcal{H}}
\newcommand{\cT}{\mathcal{T}}
\newcommand{\cU}{\mathcal{U}}
\newcommand{\cB}{\mathcal{B}}
\newcommand{\cI}{\mathcal{I}}
\newcommand{\cA}{\mathcal{A}}
\newcommand{\cF}{\mathcal{F}}
\newcommand{\cR}{\mathcal{R}}
\newcommand{\Eu}{\mathcal{E}_{\mathrm u}}
\newcommand{\Es}{\mathcal{E}_{0}}
\newcommand{\Ws}{W_{\Sigma}}

\tikzset{
  party/.style={circle,draw,thick,minimum size=7mm,inner sep=0pt,fill=white},
  dest/.style={circle,draw,thick,double,double distance=1.2pt,minimum size=7mm,inner sep=0pt,fill=white},
  serv/.style={rectangle,draw,thick,minimum width=9mm,minimum height=7mm,inner sep=1pt,fill=white},
  thin link/.style={gray!70},
  hon link/.style={very thick},
  up link/.style={gray!70,dashed,thick},
  coal/.style={rounded corners=10pt,fill=red!12,draw=red!50,thick}}

\title{The Capacity of Interactive Secure Summation}
\author{Hua Sun}
\date{}

\begin{document}
\maketitle

\blfootnote{
Hua Sun (email: hua.sun@unt.edu) is with the Department of Electrical Engineering at the University of North Texas.}

\begin{abstract}
In secure summation, $K$ parties each hold an input, and the sum of the inputs is computed at one party, securely against at most $T$ colluding parties. Communication typically takes one hop: each party sends one message directly to the party that computes the sum, using correlated randomness that is independent of the inputs and held in advance. We consider interactive secure summation, where every pair of parties is connected by a secure link, so that each party can send messages to every other party, each party has only independent local randomness, and the messages may depend on the inputs and on earlier messages over multiple rounds. We characterize the capacity region, the set of achievable rates of communication, counting all messages exchanged among the parties over all rounds, and of the randomness of every party, counted in symbols per symbol of the sum. For $0\le T\le K-2$, the optimal communication rate is $K(K-1)/(K-T)$, and the randomness rates form a polyhedron with, up to a relabeling of the parties, a single extreme point: $T$ parties have the randomness rates $(K+T-2t)/(K-T)$, $t=1,\ldots,T$, and the others none. In particular, the optimal total randomness rate is $T(K-1)/(K-T)$. Linear schemes achieve every extreme point together with the optimal communication rate. The same techniques give the capacity region when every party recovers the sum and when the sum is computed at a server without input.
\end{abstract}

\allowdisplaybreaks
\emergencystretch=3em
\section{Introduction}\label{sec:intro}

In secure summation~\cite{ZS22,ZS24}, $K$ parties each hold an input vector $W_k$ of $L$ symbols from a finite field $\Fq$ and the sum of the inputs is computed at one party, the \emph{destination}, securely against at most $T$ colluding parties. Communication typically takes one hop: each party sends one message directly to the destination, using correlated randomness held in advance. We consider \emph{interactive secure summation}: every pair of parties is connected by a secure link, so that each party can send messages to every other party, each party has only independent local randomness, and every message may depend on the sender's input, its randomness and the messages it has received. Interaction is generally hard to handle in network information theory: each transmitted symbol may depend on all symbols received before, as with feedback and two-way communication, and capacity results for interactive network models are rare~\cite{ElGamalKim11}. We study the capacity region of interactive secure summation: the achievable communication rate, the total number of transmitted symbols per symbol of the sum, together with the randomness rate of every party, the amount of local randomness that the party generates per symbol of the sum, as the input length $L$ grows and the input symbols are coded jointly. Since the local randomness of different parties is independent, any randomness that the parties come to share is generated by one of them and counted there, and the total randomness rate is the sum of the randomness rates of all parties.

The two extreme cases of the capacity problem, $T=1$ and $T=K-2$, have been solved implicitly in theoretical computer science, where the problem is called private computation of the sum~\cite{CK93,KR98,BDPV99,KM97,GR05,GIS22,FY92}, although the performance metrics there are different: each party holds a single bit or group element, and the costs are the numbers of messages and of random bits for one computation. The optimal schemes for these two extreme cases, applied to each symbol of the inputs, achieve the optimal communication rate and the optimal total randomness rate, and Figure~\ref{fig:solved} shows them for $K=4$. For $T=1$, the destination sends a randomness symbol to one party, and the running sum of the inputs plus this symbol passes through the other parties and returns to the destination; this uses $K$ messages and $1$ randomness symbol, and both numbers are necessary~\cite{CK93,KR98}. For $T=K-2$, every pair of parties other than the destination shares a randomness symbol generated by one of the two, and each of these parties sends its input plus noise from the randomness symbols it shares to the destination; this uses $\binom K2$ messages and $\binom{K-1}2$ randomness symbols, and both numbers are necessary~\cite{CK93,BDPV99}. These works count only the total randomness and do not consider the randomness of each party; it turns out that the schemes of Figure~\ref{fig:solved} are optimal for this metric as well, since their allocations, $1$ randomness symbol at the destination in (a) and $2,1,0,0$ randomness symbols at parties $1,2,3,4$ in (b), are extreme points of the polyhedron of randomness rates described below (Section~\ref{sec:results}). For $2\le T\le K-3$, the capacity problem is open: the theoretical computer science works in this regime~\cite{KM97,GR05,GIS22,FY92} bound the number of messages or random bits for one or a few input bits, up to constant or logarithmic factors, and do not settle the capacity, the cost per symbol of the sum for long inputs. The relation between our capacity results and these results is discussed further in Appendix~\ref{app:tcs}.

\begin{figure}[t]
\centering
\begin{tikzpicture}[msg/.style={-{Stealth[length=2.2mm]},thick},lab/.style={font=\small,fill=white,inner sep=1.5pt}]
\begin{scope}
\node[party] (a1) at (-2.2,1.1) {$1$};
\node[party] (a2) at (2.2,1.1) {$2$};
\node[party] (a3) at (2.2,-1.1) {$3$};
\node[dest]  (a4) at (-2.2,-1.1) {$4$};
\draw[msg] (a4) -- node[lab,left] {$z$} (a1);
\draw[msg] (a1) -- node[lab,above] {$W_1+z$} (a2);
\draw[msg] (a2) -- node[lab,right] {$W_1+W_2+z$} (a3);
\draw[msg] (a3) -- node[lab,below=1pt] {$W_1+W_2+W_3+z$} (a4);
\node[font=\small] at (0,-2.75) {(a) $T=1$: $4$ messages, $1$ randomness symbol};
\end{scope}
\begin{scope}[xshift=8.7cm]
\node[party] (b1) at (-2.6,1.1) {$1$};
\node[party] (b2) at (0,1.1) {$2$};
\node[party] (b3) at (2.6,1.1) {$3$};
\node[dest]  (b4) at (0,-1.9) {$4$};
\draw[msg,gray!70!black] (b1) -- node[lab,above] {$z_{12}$} (b2);
\draw[msg,gray!70!black] (b2) -- node[lab,above] {$z_{23}$} (b3);
\draw[msg,gray!70!black] (b1) to[bend left=35] node[lab,above] {$z_{13}$} (b3);
\draw[msg] (b1) -- node[font=\small,left=3pt,pos=0.5] {$W_1+z_{12}+z_{13}$} (b4);
\draw[msg] (b2) -- node[lab,pos=0.35] {$W_2-z_{12}+z_{23}$} (b4);
\draw[msg] (b3) -- node[font=\small,right=3pt,pos=0.5] {$W_3-z_{13}-z_{23}$} (b4);
\node[font=\small] at (0,-2.75) {(b) $T=2$: $6$ messages, $3$ randomness symbols};
\end{scope}
\end{tikzpicture}
\caption{The schemes for one symbol of the inputs; $W_k$ denotes the input symbol of party $k$. Double circle: destination. In (a), $z$ is a randomness symbol generated by party $4$; in (b), $z_{ij}$ is a randomness symbol generated by party $i$ and sent to party $j$.}
\label{fig:solved}
\end{figure}

Our main result settles the exact capacity of interactive secure summation for all $K$ and $T$,\footnote{The cases $T=K-1$ and $T=K$ reduce to $T=K-2$. The additional sets of colluding parties that include the destination are trivial, since they can compute all inputs from the sum and their own inputs; the parties other than the destination must still not learn its input, which holds in the schemes of Section~\ref{sec:scheme} because this input never enters a message.} including the open regime $2\le T\le K-3$ (Section~\ref{sec:results}). For $0\le T\le K-2$, the capacity region is the product of the set of communication rates at least $K(K-1)/(K-T)$ and a polyhedron of randomness rates, which consists of the vectors that are at least a convex combination of its extreme points. Up to a relabeling of the parties, the polyhedron has a single extreme point: $T$ distinct parties $p_1,\ldots,p_T$ have the randomness rates $(K+T-2t)/(K-T)$, $t=1,\ldots,T$, and the others none; the rates add up to the optimal total randomness rate $T(K-1)/(K-T)$. Since the destination plays a different role, the extreme points fall into $T+1$ \emph{types}: the destination is one of $p_1,\ldots,p_T$, or none of them. Linear schemes achieve every extreme point together with the optimal communication rate, so the optimal communication rate and every minimal allocation of the randomness are achieved simultaneously. At $T=1$ and $T=K-2$, the values agree with those of the two extreme cases.

For $(K,T)=(5,2)$, the smallest case with $2\le T\le K-3$, the optimal communication rate is $20/3$ and, up to a relabeling of the parties, the polyhedron of randomness rates has the single extreme point $(5/3,1,0,0,0)$: two parties with randomness rates $5/3$ and $1$, and the other three with none. The destination, party $5$, has randomness rate $5/3$, $1$ or $0$, which gives three types. Figure~\ref{fig:scheme} shows a scheme at $(5/3,1,0,0,0)$ with destination party $5$, worked out in Example~\ref{ex:scheme}. It is linear and has two decoupled phases.
\begin{itemize}
\item \emph{Phase 1: zero-sum noise} (Figure~\ref{fig:scheme}(a)). The input of each party has three symbols, so there are three sums, and for each sum the parties prepare noise symbols, one at every party, that add up to zero. An \emph{additive sharing of zero}, whose shares at any $4$ parties are independent and uniform, would take $4$ randomness symbols for each sum. Since at most $T=2$ parties collude, two sharings suffice: parties $1$ and $2$ each deal one, and the noise of every party for each sum is a linear combination of its two shares, with coefficients from a Cauchy matrix that differ across the sums. This takes $8$ randomness symbols for all three sums, does not use the inputs and does not depend on which party is the destination.
\item \emph{Phase 2: routing} (Figure~\ref{fig:scheme}(b)--(d)). Every party adds its noise to its input symbols, and for each of the three sums the \emph{noisy input symbols} are added up along a tree to the destination, a different tree for each sum, so that together with phase 1 every link carries $2$ symbols.
\end{itemize}
The noise cancels in each sum, which gives correctness. Security follows by counting: given what a coalition knows, all that it receives is determined by a certain number of symbols, and the noise in these symbols has full rank, that is, it occupies all of their dimensions, so they are independent and uniform and reveal nothing beyond the sum. Parties $1$ and $2$ generate $5$ and $3$ randomness symbols for the three sums, the rates $5/3$ and $1$, with total $8/3$. Since the role of the destination enters only phase 2, the same scheme works wherever the destination is. The schemes (Section~\ref{sec:scheme}) extend that of Figure~\ref{fig:scheme} to all $K$ and $T$, with $T$ parties that hold randomness, and achieve every extreme point of the polyhedron of randomness rates together with the optimal communication rate.

The converse (Section~\ref{sec:converse}) must hold for all interactive protocols, including nonlinear ones. It rests on two observations: the messages on a party's links, together with the sum, determine its input, since the inputs are independent and information about an input can leave its party only through these messages; and security forces the inputs of any $K-T$ \emph{honest} parties, beyond their sum, to remain uncertain to the other $T$ parties, if these collude, so this uncertainty must be carried by the messages among the honest parties. Averaging over all sets of $K-T$ honest parties bounds the communication, and comparing entropies bounds the total randomness. Restricting to the coalitions that contain a given set of parties, whose randomness they know, bounds the randomness of the other parties; these bounds describe the polyhedron of randomness rates (Section~\ref{sec:results}).

\begin{figure}[!t]
\centering
\begin{tikzpicture}[msg/.style={-{Stealth[length=2mm]},thick},lab/.style={font=\footnotesize,fill=white,inner sep=1pt},ttl/.style={font=\small,align=center}]
\begin{scope}
\node[party,fill=gray!20] (p1) at (0.7,1.7) {$1$};
\node[party,fill=gray!20] (p2) at (3.5,1.7) {$2$};
\node[party] (c3) at (0,0) {$3$};
\node[party] (c4) at (2.1,0) {$4$};
\node[dest]  (c5) at (4.2,0) {$5$};
\draw[msg] (p1) -- node[lab,above] {$2$ symbols} (p2);
\foreach \c in {c3,c4,c5}{\draw[msg] (p1) -- (\c); \draw[msg] (p2) -- (\c);}
\node[font=\footnotesize] at (2.1,-0.6) {$1$ symbol on each arrow to $3,4,5$};
\node[ttl] at (2.1,-1.3) {(a) Phase 1: $8$ symbols};
\end{scope}
\begin{scope}[xshift=6.6cm]
\node[party,fill=gray!20] (p1) at (0.7,1.7) {$1$};
\node[party,fill=gray!20] (p2) at (3.5,1.7) {$2$};
\node[party] (c3) at (0,0) {$3$};
\node[party] (c4) at (2.1,0) {$4$};
\node[dest]  (c5) at (4.2,0) {$5$};
\node[draw,thick,dashed,circle,minimum size=10mm] at (c3) {};
\draw[msg] (p1) -- (c3);
\draw[msg] (p2) -- (c3);
\draw[msg] (c4) -- (c3);
\draw[msg] (c3) to[bend right=28] (c5);
\node[ttl] at (2.1,-1.3) {(b) Phase 2, sum $1$ at party $3$: $4$ symbols};
\end{scope}
\begin{scope}[yshift=-4cm]
\node[party,fill=gray!20] (p1) at (0.7,1.7) {$1$};
\node[party,fill=gray!20] (p2) at (3.5,1.7) {$2$};
\node[party] (c3) at (0,0) {$3$};
\node[party] (c4) at (2.1,0) {$4$};
\node[dest]  (c5) at (4.2,0) {$5$};
\node[draw,thick,dashed,circle,minimum size=10mm] at (c4) {};
\draw[msg] (p1) -- (c4);
\draw[msg] (p2) -- (c4);
\draw[msg] (c3) -- (c4);
\draw[msg] (c4) -- (c5);
\node[ttl] at (2.1,-1.3) {(c) Phase 2, sum $2$ at party $4$: $4$ symbols};
\end{scope}
\begin{scope}[xshift=6.6cm,yshift=-4cm]
\node[party,fill=gray!20] (p1) at (0.7,1.7) {$1$};
\node[party,fill=gray!20] (p2) at (3.5,1.7) {$2$};
\node[party] (c3) at (0,0) {$3$};
\node[party] (c4) at (2.1,0) {$4$};
\node[dest]  (c5) at (4.2,0) {$5$};
\draw[msg] (p1) -- (c5);
\draw[msg] (p2) -- (c5);
\draw[msg] (c4) -- (c5);
\node[draw,thick,dashed,circle,minimum size=10.5mm] at (c5) {};
\draw[msg] (c3) to[bend right=28] (c5);
\node[ttl] at (2.1,-1.3) {(d) Phase 2, sum $3$ at party $5$: $4$ symbols};
\end{scope}
\end{tikzpicture}
\caption{A scheme for $(K,T)=(5,2)$ at extreme point $(5/3,1,0,0,0)$, with destination party $5$. (a) Phase 1, zero-sum noise: parties $1$ and $2$ (shaded) each deal an additive sharing of zero. (b)--(d) Phase 2, routing: the noisy input symbols of each sum are added up along a tree to the destination, through the party in the dashed circle. In total, $8+3\cdot4=20$ symbols are sent, $2$ on every link.}
\label{fig:scheme}
\end{figure}

The techniques generalize in two directions (Section~\ref{sec:gen}). When every party recovers the sum, that is, for secure all-reduce~\cite{ZhouSun26}, the optimal communication rate increases by exactly $K-1$ and the polyhedron of randomness rates is unchanged. When the sum is computed at a server without input that may collude with at most $T$ of the $K$ parties, now called users, the optimal user-to-user communication rate and the polyhedron of randomness rates of the users are the same as above, the optimal server communication rate is $1$, and the server needs no randomness.

\subsection{Related Work}\label{sec:related}

In information theory, secure summation, also called secure aggregation, has been studied with one-hop communication, in which each user sends one message directly to a server that computes the sum, using correlated randomness that is independent of the inputs. The correlated randomness is supplied in advance with user dropouts~\cite{ZS22}, with groupwise structure~\cite{ZS24}, with user selection~\cite{ZS25} or with individual rates~\cite{HU26}, or generated by the users through input-independent communication~\cite{YZWT26}. Practical protocols include~\cite{Bonawitz17}, which also uses correlated randomness agreed in advance, and SwiftAgg+~\cite{Swift23}, in which users exchange shares of their inputs. These works measure correlated randomness, in total or per user~\cite{HU26}, while here every party has only independent local randomness, and its amount at every party is part of the capacity region. Data, Prabhakaran and Prabhakaran derive information-theoretic lower bounds on communication and randomness for secure computation with three parties~\cite{DPP16}, and the sum is the simplest instance of the linear secure computation problem of~\cite{ZSF19}. In theoretical computer science, the scalar results on private computation of the sum discussed above agree with our rates at $T=1$ and $T=K-2$, and the results for many instances computed together~\cite{FY92,GIS22,KOR03} are compared with ours in Appendix~\ref{app:tcs}.

\section{Problem Statement}\label{sec:model}

For a positive integer $N$, $[N]\triangleq\{1,\ldots,N\}$. For a collection of random variables indexed by a set $\cA$, write $X_{\cA}=(X_i)_{i\in\cA}$, with the indices in increasing order; vectors over $\Fq$ are column vectors, and a tuple of symbols denotes the column vector with these entries in this order.

There are $K\ge2$ parties, and party $K$ is the destination. Party $k\in[K]$ holds an input $W_k=(W_k(1),\ldots,W_k(L))$ of $L$ i.i.d.\ uniform symbols from $\Fq$ and local randomness $Z_k=(Z_k(1),\ldots,Z_k(L_{Z_k}))$ of $L_{Z_k}$ i.i.d.\ uniform symbols from $\Fq$, which party $k$ generates locally. Since the randomness is local, no randomness is initially shared among parties, and the inputs and the local randomness variables are mutually independent,
\begin{align}
&H\big((W_k)_{k\in[K]},(Z_k)_{k\in[K]}\big)=\sum_{k\in[K]}H(W_k)+\sum_{k\in[K]}H(Z_k),\label{eq:indep}\\
&H(W_k)=L,\quad H(Z_k)=L_{Z_k}~(\text{in $q$-ary units}),\quad\forall k\in[K].\label{eq:size}
\end{align}
The desired sum is $\Ws\triangleq\sum_{k\in[K]}W_k$, with symbols $W_{\Sigma}(\ell)=\sum_{k\in[K]}W_k(\ell)$, $\ell\in[L]$, so that $H(\Ws)=L$.

Every pair of parties is connected by a secure noiseless link, whose messages are observed only by its two ends. Let $\cE=\{\{i,j\}:i,j\in[K],\,i\ne j\}$ be the set of links. For $\cA\subseteq[K]$, let $\cE[\cA]$ be the links with both ends in $\cA$ and $\cE(\cA)=\cE\setminus\cE[\cA^c]$ the links with at least one end in $\cA$, and write $\cE_k=\cE(\{k\})$ for the links at party $k$. For example, in the instance of Figure~\ref{fig:model}, $\cE[\{1,2,3\}]=\{\{1,2\},\{1,3\},\{2,3\}\}$ are the thick links, $\cE(\{4,5\})$ consists of the other $7$ links, and $\cE_5=\{\{1,5\},\{2,5\},\{3,5\},\{4,5\}\}$. Communication takes place in $N$ rounds, as in~\cite{ZhouSun26}. In round $n\in[N]$, party $i$ sends to each party $j\ne i$ a message $X_{ij}(n)$ of $\Fq$ symbols, and party $j$ receives $Y_j(n)\triangleq(X_{ij}(n))_{i\in[K]\setminus\{j\}}$. The messages that a party sends in round $n$ are functions of its input, its randomness and the messages it received in earlier rounds,
\begin{equation}\label{eq:message}
H\big((X_{ij}(n))_{j\in[K]\setminus\{i\}}\,\big|\,W_i,Z_i,Y_i(1{:}n-1)\big)=0,\quad\forall i\in[K],\ n\in[N],
\end{equation}
where $Y_i(1{:}n-1)\triangleq(Y_i(1),\ldots,Y_i(n-1))$. For a link $e=\{i,j\}$, let $X_e(1{:}n)\triangleq(X_{ij}(1),X_{ji}(1),\allowbreak\ldots,\allowbreak X_{ij}(n),X_{ji}(n))$ be the messages on it in rounds $1,\ldots,n$, let $X_e\triangleq X_e(1{:}N)$ be all messages on it, and for a set $\cE'$ of links let $X_{\cE'}(1{:}n)\triangleq(X_e(1{:}n))_{e\in\cE'}$, $X_{\cE'}\triangleq X_{\cE'}(1{:}N)$ and $X=X_{\cE}$.

A protocol must satisfy two conditions.
\begin{itemize}
\item \emph{Correctness.} The destination recovers the sum from its input, its randomness and the messages it receives,
\begin{equation}\label{eq:correct}
H\big(\Ws\,\big|\,W_K,Z_K,Y_K(1{:}N)\big)=0.
\end{equation}
\item \emph{Security.} At most $T$ colluding parties learn nothing about the inputs beyond what they already know and, if they include the destination, the sum: for every $\cT\subseteq[K]$ with $|\cT|\le T$,
\begin{equation}\label{eq:security}
\begin{cases}
I\big((W_k)_{k\in[K]};(Y_k(1{:}N))_{k\in\cT}\,\big|\,\Ws,(W_k,Z_k)_{k\in\cT}\big)=0, & K\in\cT,\\[4pt]
I\big((W_k)_{k\in[K]};(Y_k(1{:}N))_{k\in\cT}\,\big|\,(W_k,Z_k)_{k\in\cT}\big)=0, & K\notin\cT.
\end{cases}
\end{equation}
\end{itemize}
We call a set $\cT$ of at most $T$ colluding parties a \emph{coalition}, the messages $(Y_k(1{:}N))_{k\in\cT}$ that it receives its \emph{observation}, and the parties outside it \emph{honest}; a set $\cH$ of $K-T$ parties is an \emph{honest set}, the honest parties of the coalition $[K]\setminus\cH$ of $T$ parties. The messages that the parties in $\cT$ send are, by~\eqref{eq:message} and the chain rule over the rounds, functions of $(W_k,Z_k)_{k\in\cT}$ and of the messages they receive, so~\eqref{eq:security} is unchanged if the received messages $(Y_k(1{:}N))_{k\in\cT}$ are replaced by $X_{\cE(\cT)}$, all messages that the parties in $\cT$ send or receive. Since $\Ws$ is a function of the inputs, the second line of~\eqref{eq:security} implies the first for $K\notin\cT$ as well, so that every protocol satisfying~\eqref{eq:security} satisfies
\begin{equation}\label{eq:security-sum}
I\big((W_k)_{k\in[K]};X_{\cE(\cT)}\,\big|\,\Ws,(W_k,Z_k)_{k\in\cT}\big)=0,\quad\forall\cT\subseteq[K],\ |\cT|\le T.
\end{equation}
The converse uses security only through~\eqref{eq:security-sum}.

\begin{figure}[t]
\centering
\begin{tikzpicture}
\foreach \n/\a in {1/90,2/162,3/234,4/306,5/18}{\coordinate (m\n) at (\a:1.5);}
\begin{scope}[on background layer]
  \node[coal,fit=(m4)(m5),inner sep=6mm] (cm) {};
  \foreach \i/\j in {1/4,1/5,2/4,2/5,3/4,3/5,4/5}{\draw[thin link] (m\i)--(m\j);}
  \foreach \i/\j in {1/2,1/3,2/3}{\draw[hon link] (m\i)--(m\j);}
\end{scope}
\foreach \n in {1,2,3,4}{\node[party] at (m\n) {$\n$};}
\node[dest] at (m5) {$5$};
\node[red!60!black,font=\small,anchor=west] at ($(cm.east)+(0.05,0)$) {$\mathcal T$};
\end{tikzpicture}
\caption{Interactive secure summation with $(K,T)=(5,2)$; every pair of parties is connected by a secure link. Double circle: destination; shaded: a coalition $\cT=\{4,5\}$ of colluding parties; thick: the links $\cE[\{1,2,3\}]$ among the other parties, which form the honest set $\cH=\{1,2,3\}$.}
\label{fig:model}
\end{figure}

Let $L_X$ be the total length of all messages $X$ and $L_Z\triangleq\sum_{k\in[K]}L_{Z_k}$ the total length of the local randomness $(Z_k)_{k\in[K]}$. The \emph{communication rate}, the \emph{randomness rate} of party $k$ and the \emph{total randomness rate} are
\[
R_X\triangleq\frac{L_X}{L},\qquad R_{Z_k}\triangleq\frac{L_{Z_k}}{L},\quad k\in[K],\qquad R_Z\triangleq\frac{L_Z}{L}=\sum_{k\in[K]}R_{Z_k}.
\] A rate tuple $(R_X,R_{Z_1},\ldots,R_{Z_K})$ is \emph{achievable} if there is a protocol satisfying~\eqref{eq:message}, \eqref{eq:correct} and~\eqref{eq:security} with rates no greater than $R_X,R_{Z_1},\ldots,R_{Z_K}$, for some $L$. The \emph{capacity region} $\cR$ is the closure of the set of achievable rate tuples. The \emph{optimal communication rate} is the minimum of $R_X$ over $\cR$, and the \emph{optimal randomness rate} of a set of parties is the minimum over $\cR$ of the sum of their randomness rates; in particular, the optimal total randomness rate is the minimum of $R_Z$.

The parameters $K$, $T$ and $q$ are fixed. Each protocol, for a given input length $L$, has a finite number $N$ of rounds, a fixed length $L_{Z_k}$ of the local randomness of every party $k$, which is generated before the protocol starts, and messages of fixed lengths; all of them may grow with $L$, and the lengths enter only through their ratios to $L$, the rates. The capacity region is a closure over all $L$ and all protocols. The converse holds for every $N$, and it uses the length of the local randomness only through $H(Z_k)\le L_{Z_k}$, so it also holds for local randomness with any distribution on $q^{L_{Z_k}}$ values. Recovery is exact and security is perfect.\footnote{Allowing a vanishing probability of error in recovery and a vanishing normalized leakage, that is, weak security, does not change the capacity region. By Fano's inequality, the right side of~\eqref{eq:correct} becomes $o(L)$ instead of $0$, and with weak security so do the right sides of~\eqref{eq:security}; the converse then holds with additional $o(L)$ terms, which vanish after dividing by $L$.} Parties follow the protocol, and none deviates from it, for example by dropping out.

\section{Main Result}\label{sec:results}

Our main result is the characterization of the capacity region, presented in the following theorem.

\begin{theorem}\label{thm:main}
For $K\ge2$ and $0\le T\le K-2$, the capacity region $\cR$ of interactive secure summation is the set of rate tuples $(R_X,R_{Z_1},\ldots,R_{Z_K})\in\mathbb R_{\ge0}^{K+1}$ such that
\begin{equation}\label{eq:region}
\begin{aligned}
R_X&\ge\frac{K(K-1)}{K-T},\\
\sum_{k\in[K]\setminus\cA}R_{Z_k}&\ge\frac{(T-|\cA|)(K-|\cA|-1)}{K-T},\qquad\forall\,\cA\subseteq[K],\ |\cA|\le T.
\end{aligned}
\end{equation}
\end{theorem}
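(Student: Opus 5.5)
The proof splits into a converse, which shows that every achievable tuple satisfies \eqref{eq:region}, and an achievability part, which shows that every tuple satisfying \eqref{eq:region} is achievable. For achievability, the randomness polyhedron is upward closed: adding randomness rate is free, since a party can generate unused symbols. It therefore suffices to achieve, together with $R_X=K(K-1)/(K-T)$, each extreme point $r^\ast$ with rates $(K+T-2t)/(K-T)$ at distinct parties $p_1,\ldots,p_T$, and then time-share between schemes, which convexifies the region.

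The scheme generalizes Figure~\ref{fig:scheme}. Take $L=K-T$ sums. In phase 1, each of the $T$ dealers $p_t$ generates an additive sharing of zero for a vector of length chosen to match its rate. The noise of party $k$ for sum $\ell$ is a Cauchy-coefficient combination of its $T$ shares, so that any $T$ colluders see full-rank noise on the remaining parties. Dealer $p_t$ generates $K+T-2t$ symbols; the savings arise because $p_t$ need not send shares to $p_1,\ldots,p_{t-1}$, whose contributions it can absorb. In phase 2, sum $\ell$ is routed along a star centered at one of the $K-T$ non-dealers and then forwarded to the destination. Overall every link carries $K/(K-T)\cdot$ its share, for a total of $\binom K2\cdot 2/(K-T)\cdot L$ symbols, which matches $K(K-1)/(K-T)$.

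Correctness holds because the noise cancels in each sum. Security follows by counting: for each coalition $\cT$, conditioned on its own variables, everything it receives is determined by some number of linear combinations, and the noise part of these combinations has full rank by the MDS property of the Cauchy matrix. Hence its observation is uniform given $\Ws$. I would verify the full-rank claim via a determinant or Hall-type argument.

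For the converse, fix a protocol and an honest set $\cH$ with $|\cH|=K-T$, and let $\cT=\cH^c$. The first step shows that $H(X_{\cE[\cH]})\ge (K-T-1)L$ up to the handling of the destination. By the independence \eqref{eq:indep} and the round structure \eqref{eq:message}, the honest inputs beyond their sum must be recoverable from $X_{\cE(\cT)}$ together with $X_{\cE[\cH]}$ and the sum. Since \eqref{eq:security-sum} makes $X_{\cE(\cT)}$ independent of $W_\cH$ given $\Ws$, the $(K-T-1)L$ of uncertainty must pass through the internal links. I would formalize this by a cut-set/chain-rule argument over rounds, showing that $W_{\cH}$ is a function of $(X_{\cE(\cT)},X_{\cE[\cH]},Z_\cH,W_\cT)$ in the sense of entropy. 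Each link lies in $\binom{K-2}{K-T-2}$ of the honest sets, so averaging over all $\binom{K}{K-T}$ honest sets gives $L_X\ge \binom{K}{2}\cdot 2L/(K-T)$.

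For the randomness bounds with a given $\cA$, I would restrict to coalitions $\cT\supseteq\cA$. The randomness of $\cA$ is then known, and the argument is rerun with $H(Z_{\cA^c})$ compared to the entropy of messages that must be masked: messages on the links of $\cT\setminus\cA$ are independent of the inputs yet are determined by the inputs and the randomness. This gives roughly $(T-|\cA|)$ times a per-party masking amount, which after averaging yields $(T-|\cA|)(K-|\cA|-1)/(K-T)$. This entropy comparison is the step I expect to be the main obstacle, since it requires a carefully chosen chain of coalitions $\cA\subset\cT_1\subset\cdots$ that adds colluders one at a time. I would try an induction on $T-|\cA|$, peeling off one party and using the communication bound as the base ingredient.
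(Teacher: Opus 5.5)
\textbf{Main gap: the randomness bounds of the converse.} You leave these as an unworked induction over chains of coalitions. The missing idea is that no induction is needed once the communication bound is proved as an \emph{entropy} bound on the messages rather than a bound on their length.

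Averaging $H(X_{\cE[\cH]})\ge(K-T-1)L$ over link lengths does give $L_X\ge K(K-1)L/(K-T)$. It says nothing about $H(X)$, however, and $H(X)$ is what the randomness must be compared with. Keep the coalition bound in conditional form instead. By recoverability and then \eqref{eq:security-sum},
\[
H(X_{\cE[\cH]}\,|\,X_{\cE(\cT)},\Ws,W_{\cT},Z_{\cT})\ge H(W_{\cH}\,|\,X_{\cE(\cT)},\Ws,W_{\cT},Z_{\cT})=H(W_{\cH}\,|\,\Ws,W_{\cT},Z_{\cT})=(K-T-1)L,
\]
and then drop $(W_\cT,Z_\cT)$ from the conditioning. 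Note that $\cE(\cT)=\cE\setminus\cE[\cH]$, and every link lies in exactly $D=\binom{K-2}{K-T-2}$ of the sets $\cE[\cH]$. The complement form of Shearer's lemma, $\sum_{\mathcal G}H(X_{\mathcal G}\,|\,X_{\cI\setminus\mathcal G},Q)\le D\,H(X_{\cI}\,|\,Q)$, therefore gives $H(X\,|\,\Ws)\ge K(K-1)L/(K-T)$. Then $KL+L_Z\ge H(X,\Ws)=L+H(X\,|\,\Ws)$ yields $L_Z\ge T(K-1)L/(K-T)$ in one line.

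For a set $\cA$ with $|\cA|=M$, do the following:
\begin{itemize}
\item use only the honest sets inside $\cA^c$;
\item add $Q=(X_{\cE(\cA)},\Ws,W_{\cA},Z_{\cA})$ to the conditioning;
\item apply the same inequality on $\cE[\cA^c]$ to get $H(X_{\cE[\cA^c]}\,|\,Q)\ge(K-M)(K-M-1)L/(K-T)$;
\item compare with $(K-M)L+\sum_{k\notin\cA}L_{Z_k}\ge H(X,\Ws\,|\,W_{\cA},Z_{\cA})\ge L+H(X_{\cE[\cA^c]}\,|\,Q)$.
\end{itemize}
The premise of your sketch is also false. The messages at the colluders are not independent of the inputs: they carry the colluders' own inputs and, at the destination, the sum. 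Security gives independence only conditionally on $(\Ws,W_\cT,Z_\cT)$.

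\textbf{Two further steps fail as you describe them.}

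\emph{The coalition bound.} You propose to show that $W_{\cH}$ is determined by $(X_{\cE(\cT)},X_{\cE[\cH]},Z_{\cH},W_{\cT})$. With the honest randomness $Z_\cH$ in the conditioning, security can no longer be invoked. You would only bound $H(X_{\cE[\cH]},Z_{\cH}\,|\,X_{\cE(\cT)},\Ws,W_\cT,Z_\cT)$, which is useless. The statement you need is randomness-free: $H(W_k\,|\,X_{\cE_k},\Ws)=0$ for every $k$. It is proved by showing $I(\Omega_k;\bar\Omega_k\,|\,X_{\cE_k})=0$, with $\Omega_k=(W_k,Z_k)$, by induction over the rounds. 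Your round-by-round argument is the right tool, but it is aimed at the wrong target.

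\emph{The achievability.} Your explanation of the savings does not produce $K+T-2t$. If $p_t$ simply omits its shares to $p_1,\ldots,p_{t-1}$, it draws $K-t$ symbols. The total is then $TK-T(T+1)/2<T(K-1)$ for $T\ge2$, which is below the converse, so that scheme cannot be secure. What works is scalar sharings in which, for $t<t'$, supplier $p_t$ draws both $s_{t,p_{t'}}$ and $s_{t',p_t}$. Then $p_t$ draws $(K-T)+2(T-t)$ symbols.

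The security count you defer goes as follows. Each colluder is either a supplier or a collector, so it either removes a column or adds a row of the Cauchy matrix. This leaves a $|\mathcal J|\times|\cU|$ submatrix, where $\mathcal J$ indexes the sums collected by colluders and $\cU$ the honest suppliers, with $|\cU|-|\mathcal J|=T-|\cT|\ge0$. One honest party must be set aside: the destination if it is honest, and otherwise a party whose noisy symbols follow from the sum.

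Finally, reducing to these points presupposes that they are the only extreme points of the randomness polyhedron. This needs a proof, for instance by showing that the right-hand side, as a function of the set $[K]\setminus\cA$, is supermodular, and then uncrossing the tight sets into a maximal chain.
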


We make the following remarks on Theorem~\ref{thm:main}.
\begin{itemize}
\item \emph{Proof.} The converse is proved in Section~\ref{sec:converse} and holds for every protocol satisfying~\eqref{eq:message}, \eqref{eq:correct} and~\eqref{eq:security-sum}; the achievability is proved in Section~\ref{sec:scheme}.
\item \emph{Extreme points.} For $T\ge1$, the randomness constraints describe the vectors that are at least a convex combination of the extreme points
\begin{equation}\label{eq:extreme}
R_{Z_{p_t}}=\frac{K+T-2t}{K-T},\ t\in[T],\qquad R_{Z_k}=0,\ k\notin\{p_1,\ldots,p_T\},
\end{equation}
over all choices of $T$ distinct parties $p_1,\ldots,p_T$, which may include the destination; this is proved in Section~\ref{sec:extreme}. Up to a relabeling of the parties, there is a single extreme point, the vector
\[
\Big(\frac{K+T-2}{K-T},\frac{K+T-4}{K-T},\ldots,\frac{K-T}{K-T},0,\ldots,0\Big)
\]
with $T$ nonzero entries. Since the destination plays a different role, up to a relabeling of the other $K-1$ parties there are $T+1$ types of extreme points: in type $t\in[T]$ the destination is $p_t$, with randomness rate $(K+T-2t)/(K-T)$, and in type $0$ it is none of $p_1,\ldots,p_T$, with no randomness. Table~\ref{tab:extreme} lists them for $(K,T)=(5,2)$, where the three types are the three rows.
\begin{table}[!ht]
\centering
\renewcommand{\arraystretch}{1.25}
\begin{tabular}{lll}
\toprule
Destination (party $5$) & Extreme points $(R_{Z_1},\ldots,R_{Z_5})$ & Number\\
\midrule
no randomness, $p_1,p_2\in[4]$ & $(5/3,1,0,0,0)$ and permutations of the first $4$ entries & $12$\\
first, $p_1=5$ & $(1,0,0,0,5/3)$ and permutations of the first $4$ entries & $4$\\
second, $p_2=5$ & $(5/3,0,0,0,1)$ and permutations of the first $4$ entries & $4$\\
\bottomrule
\end{tabular}
\caption{The $20$ extreme points of the polyhedron of randomness rates for $(K,T)=(5,2)$, grouped by the role of the destination. Each gives randomness $5/3$ to party $p_1$ and $1$ to party $p_2$, with total $8/3$.}
\label{tab:extreme}
\end{table}

\item \emph{Simultaneous optimality.} Every extreme point is achieved together with the optimal communication rate by a linear scheme that works over every $\Fq$ (Section~\ref{sec:scheme}). Hence the optimal communication rate and every minimal allocation of the randomness are achieved simultaneously, only $T$ parties need randomness, and linear schemes are optimal. Every extreme point is achieved exactly, at a finite input length $L$, by a scheme with three rounds for $T\ge1$ and one round for $T=0$; the closure in the definition of $\cR$ is needed only for time sharing.
\item \emph{Randomness outside a set.} The constraint with a set $\cA$ of $M\le T$ parties gives the optimal randomness rate $(T-M)(K-M-1)/(K-T)$ of the parties outside $\cA$, achieved, with the optimal communication rate, by the extreme points whose $M$ largest randomness rates are at $\cA$. In particular, $\cA=\emptyset$ gives the optimal total randomness rate $T(K-1)/(K-T)$.
\item \emph{Randomness of the destination.} For $T\ge1$, while the total randomness rate is optimal, the destination can generate any amount up to $(K+T-2)/(K-T)$, but no more. A coalition that contains the destination knows its randomness, so against such coalitions the other parties face the problem with $K-1$ parties and coalitions of at most $T-1$ parties and must generate its optimal randomness rate $(T-1)(K-2)/(K-T)$, which is the constraint of~\eqref{eq:region} for $\cA=\{K\}$; the destination can generate only the rest of the total, $T(K-1)/(K-T)-(T-1)(K-2)/(K-T)=(K+T-2)/(K-T)$, and does so at the extreme points with $p_1=K$. Every amount in between is achieved by time sharing with an extreme point at which the destination has no randomness.
\item \emph{Example.} For $(K,T)=(5,2)$ the region is $R_X\ge20/3$, $\sum_{k\in[5]}R_{Z_k}\ge8/3$ and $\sum_{k\ne j}R_{Z_k}\ge1$ for every $j\in[5]$, with extreme points the permutations of $(5/3,1,0,0,0)$. The converse and the scheme for this case are worked out in Sections~\ref{sec:ex-converse} and~\ref{sec:ex-scheme}.
\item \emph{No collusion.} For $T=0$ the randomness constraints are vacuous, and $R_X\ge K-1$ is achieved without randomness by sending every input to the destination.
\item \emph{Output to every party.} When every party must recover the sum, the optimal communication rate increases by exactly $K-1$ and the randomness constraints are unchanged (Section~\ref{sec:all}).
\end{itemize}

The bounds have a simple interpretation in four steps, which the converse makes precise (Section~\ref{sec:converse}).

\emph{Step 1: one honest set.} A coalition of $T$ parties observes the messages on every link at its parties. The other $K-T$ parties form an honest set. Beyond their sum, their inputs have $(K-T-1)L$ symbols of uncertainty, and these must be carried by the messages on the $\binom{K-T}2=(K-T)(K-T-1)/2$ links inside the honest set. So these links carry at least $(K-T-1)L$ symbols, that is, $2L/(K-T)$ symbols per link on average.

\emph{Step 2: total communication.} Every set of $K-T$ parties is an honest set, and every link lies inside the same number of honest sets. Averaging Step~1 over the honest sets, the average load per link is at least $2L/(K-T)$ symbols, so the total communication is at least $\binom K2\cdot2L/(K-T)=K(K-1)L/(K-T)$.

\emph{Step 3: total randomness.} The messages and the sum together have entropy at least $L+K(K-1)L/(K-T)$, and the inputs contribute at most $KL$ of it. The randomness contributes the rest, at least $T(K-1)L/(K-T)$.

\emph{Step 4: randomness outside a set.} A coalition that contains a set of $M$ parties knows their randomness. Against such coalitions, the other $K-M$ parties face the same problem with coalitions of at most $T-M$ parties, so by Step~3 they generate at least $(T-M)(K-M-1)L/(K-T)$.

\section{Converse}\label{sec:converse}

We begin with two lemmas, then work through the converse for $(K,T)=(5,2)$ (Section~\ref{sec:ex-converse}) and prove the converse of Theorem~\ref{thm:main} (Section~\ref{sec:converse-main}).

The first lemma says that the messages on a party's links, together with the sum, determine its input. It is Fact~1 of~\cite{GIS22}, which goes back to~\cite{CK93,BDPV99}; we include a proof for completeness.

\begin{lemma}[Recoverability]\label{lem:recover}
In a protocol satisfying~\eqref{eq:message} and~\eqref{eq:correct}, for every party $k\in[K]$,\footnote{Lemma~\ref{lem:recover} does not use security at all; it holds for every protocol that computes the sum from independent inputs and randomness.}
\[
H(W_k\,|\,X_{\cE_k},\Ws)=0.
\]
\end{lemma}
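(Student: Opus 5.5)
Let $O=[K]\setminus\{k\}$ denote the other parties, and write $A=(W_k,Z_k)$ and $B=(W_i,Z_i)_{i\in O}$. By~\eqref{eq:indep}, $A$ and $B$ are independent. The plan is the standard rectangle (cut-set) argument for interactive protocols across the cut $\{k\}$ versus $O$. All messages on the links $\cE_k$ are exchanged across this cut. The messages inside $O$ are functions of $B$ and of what $O$ receives from $k$.

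\emph{Step 1 (conditional independence across the cut).} I would show by induction on the rounds $n$ that
\[
I\big(A;B\,\big|\,X_{\cE_k}(1{:}n)\big)=0.
\]
The base case $n=0$ is~\eqref{eq:indep}. For the inductive step, by~\eqref{eq:message} applied to party $k$, its round-$n$ messages to $O$ are a function of $A$ and $X_{\cE_k}(1{:}n-1)$. For the parties in $O$, apply~\eqref{eq:message} to each of them and chain over the rounds: the messages inside $O$ up to round $n-1$, and hence the round-$n$ messages from $O$ to $k$, are functions of $B$ and $X_{\cE_k}(1{:}n-1)$. The usual lemma then applies. If $A\perp B\mid C$ and $f$ is a function of $(A,C)$, then $A\perp B\mid (C,f)$. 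The same holds for a function of $(B,C)$. Applying this twice in round $n$ completes the induction, and taking $n=N$ gives $I(A;B\mid X_{\cE_k})=0$.

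\emph{Step 2 (both sides compute the sum).} Correctness~\eqref{eq:correct} says that $\Ws$ is a function of $(W_K,Z_K,Y_K(1{:}N))$. If $k=K$, this is a function of $(A,X_{\cE_k})$. If $k\ne K$, then $Y_K$ includes $X_{kK}$, which lies in $X_{\cE_k}$, together with messages inside $O$. As in Step 1, the messages inside $O$ are functions of $(B,X_{\cE_k})$, so $\Ws$ is a function of $(B,X_{\cE_k})$. In either case, one side together with $X_{\cE_k}$ determines $\Ws$.

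\emph{Step 3 (conclusion).} Since $\Ws=W_k+\sum_{i\in O}W_i$, the sum is also of the form $g(A)+h(B)$.

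Consider first $k\ne K$. By Step 2, $\Ws$ is a function of $(B,X_{\cE_k})$, so $W_k=\Ws-h(B)$ is a function of $(B,X_{\cE_k})$. Now condition on $X_{\cE_k}=x$. By Step 1, $A\perp B$ given $x$, while $W_k$, a function of $A$, is also a function of $B$ given $x$. A random variable that is a function of each of two conditionally independent variables is conditionally deterministic. Hence $H(W_k\mid X_{\cE_k})=0$, which is stronger than the claim.

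For $k=K$, Step 2 shows that $\Ws$ is a function of $(A,X_{\cE_k})$. Then $S_O=\sum_{i\in O}W_i=\Ws-W_K$ is a function of $(A,X_{\cE_k})$ and of $B$. By the same argument, $S_O$ is determined by $X_{\cE_k}$, so $W_K=\Ws-S_O$ is determined by $(X_{\cE_k},\Ws)$.

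I expect the main obstacle to be stating Step 1 cleanly, namely tracking that the messages inside $O$ depend on party $k$ only through $X_{\cE_k}$. The rest is bookkeeping with the "function of two conditionally independent variables" fact.
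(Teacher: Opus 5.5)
Your proposal is correct and follows essentially the same route as the paper. The paper also proves $I(\Omega_k;\bar\Omega_k\,|\,X_{\cE_k})=0$ by a round-by-round induction that alternately adds the messages $k$ sends and receives. It then uses the fact that a quantity computable from either side of the cut together with $X_{\cE_k}$ is determined by $X_{\cE_k}$, applied to $W_k$ for $k\ne K$ (giving the stronger $H(W_k\,|\,X_{\cE_k})=0$) and to $\Ws-W_K$ for $k=K$.
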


\begin{proof}
The idea is that the inputs are independent, so information about $W_k$ can leave party $k$ only through the messages on its links, and the destination needs it to compute the sum. Fix $k\in[K]$. Let $\Omega_j=(W_j,Z_j)$ and let $\bar\Omega_k=(\Omega_j)_{j\ne k}$ be the data of the other parties. Recall from Section~\ref{sec:model} that $X_{\cE_k}(1{:}n)=(X_e(1{:}n))_{e\in\cE_k}$ are the messages on the links at $k$ in rounds $1,\ldots,n$, and $X_{\cE_k}(1{:}N)=X_{\cE_k}$.

\emph{Independence given the links at $k$.} We show $I(\Omega_k;\bar\Omega_k\,|\,X_{\cE_k})=0$. By~\eqref{eq:message} and induction over the rounds, the messages that $k$ sends in round $n$ are a function of $(\Omega_k,X_{\cE_k}(1{:}n-1))$, and those that it receives in round $n$ are a function of $(\bar\Omega_k,X_{\cE_k}(1{:}n-1))$, since every message on another link is a function of $\bar\Omega_k$ and of the messages that $k$ sent earlier. For random variables $Q$ and $\Psi$ with $\Psi$ a function of $(\Omega_k,Q)$, $I(\Omega_k;\bar\Omega_k\,|\,Q)=I(\Omega_k,\Psi;\bar\Omega_k\,|\,Q)\ge I(\Omega_k;\bar\Omega_k\,|\,Q,\Psi)$, and similarly for a function of $(\bar\Omega_k,Q)$. Adding the messages of each round in turn, first those that $k$ sends and then those that it receives,
\[
I(\Omega_k;\bar\Omega_k\,|\,X_{\cE_k})\le I(\Omega_k;\bar\Omega_k\,|\,X_{\cE_k}(1{:}N-1))\le\cdots\le I(\Omega_k;\bar\Omega_k)=0,
\]
where the last step is~\eqref{eq:indep}.

\emph{Recovery.} If a variable $\Phi$ is a function of $(\Omega_k,X_{\cE_k})$ and also of $(\bar\Omega_k,X_{\cE_k})$, then
\[
H(\Phi\,|\,X_{\cE_k})=I(\Phi;\Phi\,|\,X_{\cE_k})\le I(\Omega_k;\bar\Omega_k\,|\,X_{\cE_k})=0,
\]
so $\Phi$ is a function of $X_{\cE_k}$. By~\eqref{eq:correct}, $\Ws$ is a function of $(\Omega_K,Y_K(1{:}N))$, the data and the observation of the destination. If $k\ne K$, then $(\Omega_K,Y_K(1{:}N))$ is a function of $(\bar\Omega_k,X_{\cE_k})$, since $\Omega_k$ affects the other parties only through $X_{\cE_k}$, so $\Phi=W_k=\Ws-\sum_{j\ne k}W_j$ qualifies, and $H(W_k\,|\,X_{\cE_k})=0$. If $k=K$, then $(\Omega_K,Y_K(1{:}N))$ is a function of $(\Omega_K,X_{\cE_K})$, so $\Phi=\sum_{j\ne K}W_j=\Ws-W_K$ qualifies, and $W_K=\Ws-\Phi$ gives $H(W_K\,|\,X_{\cE_K},\Ws)=0$.
\end{proof}

The second lemma is the conditional form of Shearer's lemma~\cite{CGFS86} together with its form for complements.

\begin{lemma}[Conditional Shearer]\label{lem:cover}
Let $(X_e)_{e\in\cI}$ and $Q$ be arbitrary discrete random variables, and let $\cF$ be a collection of subsets of $\cI$, possibly with repetitions, such that every $e\in\cI$ lies in exactly $D$ sets of $\cF$. Then
\begin{align}
\sum_{\mathcal G\in\cF}H(X_{\mathcal G}\,|\,Q)&\ge D\,H(X_{\cI}\,|\,Q),\label{eq:shearer}\\
\sum_{\mathcal G\in\cF}H(X_{\mathcal G}\,|\,X_{\cI\setminus\mathcal G},Q)&\le D\,H(X_{\cI}\,|\,Q).\label{eq:shearer-dual}
\end{align}
\end{lemma}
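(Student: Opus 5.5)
The plan is to derive both inequalities from the conditional Shearer inequality for the case $D=1$, namely subadditivity $H(X_{\mathcal G_1},\ldots\,|\,Q)\le\sum H(\cdot\,|\,Q)$, and its dual. I will prove them by the standard chain-rule argument, with $Q$ carried along as a conditioning variable. First fix an ordering $e_1,\ldots,e_m$ of $\cI$. By the chain rule,
\[
H(X_{\cI}\,|\,Q)=\sum_{i=1}^m H(X_{e_i}\,|\,X_{e_1},\ldots,X_{e_{i-1}},Q).
\]
For each $\mathcal G\in\cF$, list its elements in the induced order, $e_{i_1}<\cdots<e_{i_r}$.

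For \eqref{eq:shearer}, expand each term by the chain rule:
\[
H(X_{\mathcal G}\,|\,Q)=\sum_{s}H(X_{e_{i_s}}\,|\,X_{e_{i_1}},\ldots,X_{e_{i_{s-1}}},Q).
\]
Each summand is at least $H(X_{e_{i_s}}\,|\,X_{e_1},\ldots,X_{e_{i_s-1}},Q)$, because conditioning on more variables reduces entropy. Summing over $\mathcal G\in\cF$, every $e_i$ appears in exactly $D$ sets, so each chain-rule term of $H(X_{\cI}\,|\,Q)$ is collected exactly $D$ times. This gives the bound. Repetitions in $\cF$ cause no difficulty, since we sum over the multiset.

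For \eqref{eq:shearer-dual}, use the opposite direction. Again expand
\[
H(X_{\mathcal G}\,|\,X_{\cI\setminus\mathcal G},Q)=\sum_s H(X_{e_{i_s}}\,|\,X_{e_{i_1}},\ldots,X_{e_{i_{s-1}}},X_{\cI\setminus\mathcal G},Q).
\]
The conditioning set of the $s$-th summand contains $e_1,\ldots,e_{i_s-1}$, since every earlier element is either in $\mathcal G$ (and then earlier in the list) or in $\cI\setminus\mathcal G$. Hence the summand is at most $H(X_{e_{i_s}}\,|\,X_{e_1},\ldots,X_{e_{i_s-1}},Q)$. Summing over $\cF$ and counting each $e_i$ exactly $D$ times gives $\le D\,H(X_{\cI}\,|\,Q)$.

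An alternative route for the dual is to apply \eqref{eq:shearer} to the complement family $\{\cI\setminus\mathcal G\}$, which covers each element $|\cF|-D$ times. One then uses $H(X_{\mathcal G}\,|\,X_{\cI\setminus\mathcal G},Q)=H(X_{\cI}\,|\,Q)-H(X_{\cI\setminus\mathcal G}\,|\,Q)$. However, the direct argument above is cleaner.

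There is no real obstacle here. The only point needing care is the monotonicity direction in each case: dropping conditioning variables in the first inequality, and adding them in the second. In both cases the ordering is chosen so that the chain-rule terms align with those of $H(X_{\cI}\,|\,Q)$.
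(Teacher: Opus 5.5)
Your proof is correct. Your proof of \eqref{eq:shearer} is the same chain-rule argument the paper gives. For \eqref{eq:shearer-dual} your main route differs. The paper applies \eqref{eq:shearer} to the complement family $\{\cI\setminus\mathcal G:\mathcal G\in\cF\}$, which covers each element $|\cF|-D$ times, and then subtracts using $H(X_{\mathcal G}\,|\,X_{\cI\setminus\mathcal G},Q)=H(X_{\cI}\,|\,Q)-H(X_{\cI\setminus\mathcal G}\,|\,Q)$; this is exactly the alternative you mention. You instead compare term by term with the same chain-rule expansion of $H(X_{\cI}\,|\,Q)$, using that the conditioning of each term already contains every earlier element.

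Your version shows that both inequalities come from one bound on each chain-rule term: $H(X_e\,|\,X_{\mathcal G\cap\{e':e'<e\}},X_{\cI\setminus\mathcal G},Q)\le H(X_e\,|\,X_{<e},Q)\le H(X_e\,|\,X_{\mathcal G\cap\{e':e'<e\}},Q)$, summed over $\mathcal G\in\cF$ and $e\in\mathcal G$. It also never subtracts entropies, which is harmless here since all alphabets are finite. The paper's version is shorter once \eqref{eq:shearer} is in hand.

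Your opening sentence about deriving both inequalities from the case $D=1$ does not describe the argument you actually give, and can be dropped.
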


\begin{proof}
Fix an order on $\cI$ and write $X_{<e}$ for the variables $X_{e'}$ with $e'$ before $e$.

\emph{Inequality~\eqref{eq:shearer}.} For each $\mathcal G\in\cF$,
\[
H(X_{\mathcal G}\,|\,Q)\overset{(a)}{=}\sum_{e\in\mathcal G}H\big(X_e\,\big|\,X_{\mathcal G\cap\{e':e'<e\}},Q\big)\overset{(b)}{\ge}\sum_{e\in\mathcal G}H(X_e\,|\,X_{<e},Q),
\]
where (a) is the chain rule over the elements of $\mathcal G$ in order, and (b) holds because conditioning reduces entropy and $\mathcal G\cap\{e':e'<e\}$ is contained in $\{e':e'<e\}$. Summing over $\mathcal G\in\cF$ and exchanging the order of summation,
\begin{align*}
\sum_{\mathcal G\in\cF}\sum_{e\in\mathcal G}H(X_e\,|\,X_{<e},Q)&=\sum_{e\in\cI}\big|\{\mathcal G\in\cF:e\in\mathcal G\}\big|\,H(X_e\,|\,X_{<e},Q)\\
&=D\sum_{e\in\cI}H(X_e\,|\,X_{<e},Q)=D\,H(X_{\cI}\,|\,Q),
\end{align*}
where the last step is the chain rule.

\emph{Inequality~\eqref{eq:shearer-dual}.} Every $e\in\cI$ lies in exactly $|\cF|-D$ of the complements $\cI\setminus\mathcal G$, $\mathcal G\in\cF$, so~\eqref{eq:shearer} applied to the complements gives $\sum_{\mathcal G\in\cF}H(X_{\cI\setminus\mathcal G}\,|\,Q)\ge(|\cF|-D)H(X_{\cI}\,|\,Q)$. Since $H(X_{\mathcal G}\,|\,X_{\cI\setminus\mathcal G},Q)=H(X_{\cI}\,|\,Q)-H(X_{\cI\setminus\mathcal G}\,|\,Q)$, summing over $\mathcal G\in\cF$ gives~\eqref{eq:shearer-dual}.
\end{proof}

\subsection{Example: The Converse for $(K,T)=(5,2)$}\label{sec:ex-converse}

We first work through the case $(K,T)=(5,2)$ of Figure~\ref{fig:model}, the smallest case strictly between $T=1$ and $T=K-2$, in which the randomness outside a single party is constrained and the honest sets overlap in more than one party.

\begin{example}\label{ex:converse}
Let $(K,T)=(5,2)$ and fix a protocol satisfying~\eqref{eq:message}--\eqref{eq:security}. We use the notation of Section~\ref{sec:model}; for example, $X_{\cE[\{1,2,3\}]}$ are the messages on the $3$ links $\{1,2\},\{1,3\},\{2,3\}$ inside $\{1,2,3\}$, and $X_{\cE(\{4,5\})}$ those on the other $7$ links $\{1,4\},\{2,4\},\{3,4\},\{1,5\},\{2,5\},\{3,5\},\{4,5\}$, the links at $4$ or $5$.

\emph{Recoverability.} By Lemma~\ref{lem:recover}, $W_1$ is a function of $\Ws$ and $X_{\cE_1}$, and similarly for every party.

\emph{Coalition bound.} Take the honest set $\{1,2,3\}$ and the coalition $\{4,5\}$, which sends or receives $X_{\cE(\{4,5\})}$. Then
\begin{align*}
H(X_{\cE[\{1,2,3\}]}\,|\,X_{\cE(\{4,5\})},\Ws)
&\ge H(X_{\cE[\{1,2,3\}]}\,|\,X_{\cE(\{4,5\})},\Ws,W_{\{4,5\}},Z_{\{4,5\}})\\
&\ge I(W_{\{1,2,3\}};X_{\cE[\{1,2,3\}]}\,|\,X_{\cE(\{4,5\})},\Ws,W_{\{4,5\}},Z_{\{4,5\}})\\
&\overset{(a)}{=}H(W_{\{1,2,3\}}\,|\,X_{\cE(\{4,5\})},\Ws,W_{\{4,5\}},Z_{\{4,5\}})\\
&\overset{(b)}{=}H(W_{\{1,2,3\}}\,|\,\Ws,W_{\{4,5\}},Z_{\{4,5\}})\\
&\overset{(c)}{=}H(W_1,W_2,W_3\,|\,W_1+W_2+W_3)=2L,
\end{align*}
where (a) holds because, by Recoverability, each of $W_1,W_2,W_3$ is a function of $\Ws$ and of $X=(X_{\cE[\{1,2,3\}]},X_{\cE(\{4,5\})})$; (b) is~\eqref{eq:security-sum} for the coalition $\{4,5\}$; and (c) holds because $W_1,W_2,W_3$ are independent of $(W_{\{4,5\}},Z_{\{4,5\}})$, and $\Ws$ together with $W_{\{4,5\}}$ is equivalent to $W_1+W_2+W_3$ together with $W_{\{4,5\}}$. In words, the inputs of the honest parties, beyond their sum, are hidden from the coalition, so their $2L$ symbols of uncertainty must be carried by the $3$ links inside the honest set.

\emph{Covering.} The same bound holds for each of the $10$ honest sets $\cH\subseteq[5]$ with $|\cH|=3$,
\[
H(X_{\cE[\cH]}\,|\,X_{\cE([5]\setminus\cH)},\Ws)\ge2L.
\]
Each link lies inside exactly $3$ of the $10$ honest sets; for example, the link $\{1,2\}$ lies inside $\{1,2,3\}$, $\{1,2,4\}$ and $\{1,2,5\}$. Hence, by~\eqref{eq:shearer-dual} of Lemma~\ref{lem:cover} with $\cI=\cE$, $Q=\Ws$ and $D=3$, the sum of the $10$ left sides is at most $3H(X\,|\,\Ws)$, and
\[
3H(X\,|\,\Ws)\ge10\cdot2L,\qquad\text{so}\qquad H(X\,|\,\Ws)\ge\frac{20}{3}L.
\]
Equivalently, the average load per link is at least $2L/3$ symbols.

\emph{Rate.} Since $X$ has $L_X$ symbols, $L_X\ge H(X)\ge H(X\,|\,\Ws)\ge 20L/3$ and $R_X\ge20/3$. For randomness, the messages and the sum are functions of all inputs and randomness by~\eqref{eq:message}, so by~\eqref{eq:indep} and~\eqref{eq:size},
\[
5L+L_Z=H(W_{[5]},Z_{[5]})\ge H(X,\Ws)=H(\Ws)+H(X\,|\,\Ws)\ge L+\frac{20}{3}L,
\]
so $L_Z\ge 8L/3$ and $R_Z\ge8/3$.

\emph{Randomness outside a set of parties.} Consider now the set $\cA=\{5\}$ and only the $4$ honest sets $\cH\subseteq[4]$, whose coalitions contain party $5$, and condition also on $Q=(X_{\cE_5},\Ws,W_5,Z_5)$, which these coalitions know. The coalition bound gives $H(X_{\cE[\cH]}\,|\,X_{\cE([5]\setminus\cH)},\Ws,W_5,Z_5)\ge2L$, and each of the $6$ links inside $[4]$ lies inside $2$ of these honest sets, so by~\eqref{eq:shearer-dual} of Lemma~\ref{lem:cover}, $2H(X_{\cE[\{1,2,3,4\}]}\,|\,Q)\ge4\cdot2L$. Hence
\begin{align*}
4L+\sum_{k\in[4]}L_{Z_k}&=H(W_{[4]},Z_{[4]}\,|\,W_5,Z_5)\ge H(X,\Ws\,|\,W_5,Z_5)\\
&=H(X_{\cE_5},\Ws\,|\,W_5,Z_5)+H(X_{\cE[\{1,2,3,4\}]}\,|\,Q)\ge L+4L,
\end{align*}
so $R_{Z_1}+R_{Z_2}+R_{Z_3}+R_{Z_4}\ge1$. The same steps with any party $j$ in place of party $5$ give $\sum_{k\ne j}R_{Z_k}\ge1$; for sets $\cA$ of $2$ parties the bound is $0$.
\end{example}

\subsection{Proof of the Converse of Theorem~\ref{thm:main}}\label{sec:converse-main}

The general case follows the same steps, with honest sets of size $K-T\ge2$ in place of size $3$. Fix a protocol satisfying~\eqref{eq:message}--\eqref{eq:security}.

\emph{Recoverability.} By Lemma~\ref{lem:recover}, $H(W_k\,|\,X_{\cE_k},\Ws)=0$ for every $k\in[K]$.

\emph{Coalition bound.} Let $\cH\subseteq[K]$ with $|\cH|=K-T$ and $\cT=\cH^c$, so $|\cT|=T$ and $\cE=\cE[\cH]\cup\cE(\cT)$. As in Example~\ref{ex:converse},
\[
\begin{aligned}
H(X_{\cE[\cH]}\,|\,X_{\cE(\cT)},\Ws,W_{\cT},Z_{\cT})&\ge H(W_{\cH}\,|\,X_{\cE(\cT)},\Ws,W_{\cT},Z_{\cT})\\
&=H(W_{\cH}\,|\,\Ws,W_{\cT},Z_{\cT})=(K-T-1)L,
\end{aligned}
\]
where the inequality uses Recoverability for every $k\in\cH$, the first equality is~\eqref{eq:security-sum}, and the second equality holds because $K-T$ independent uniform vectors given their sum have entropy $(K-T-1)L$. Removing $(W_{\cT},Z_{\cT})$ from the conditioning gives $H(X_{\cE[\cH]}\,|\,X_{\cE(\cT)},\Ws)\ge(K-T-1)L$. For $T=0$, $\cT$ is empty and the bound reads $H(X\,|\,\Ws)\ge(K-1)L$.

\emph{Covering.} Apply~\eqref{eq:shearer-dual} of Lemma~\ref{lem:cover} with $\cI=\cE$, $Q=\Ws$ and $\cF=\{\cE[\cH]:|\cH|=K-T\}$. A link $\{i,j\}$ lies in $\cE[\cH]$ if and only if $i,j\in\cH$, which holds for $\binom{K-2}{K-T-2}$ of the $\binom{K}{K-T}$ honest sets $\cH$. Hence $D=\binom{K-2}{K-T-2}$, and summing the coalition bound over all honest sets gives
\[
\binom{K}{K-T}(K-T-1)L\le\binom{K-2}{K-T-2}H(X\,|\,\Ws),
\qquad\text{so}\qquad
H(X\,|\,\Ws)\ge\frac{K(K-1)}{K-T}L.
\]

\emph{Rate.} As in Example~\ref{ex:converse}, $L_X\ge H(X\,|\,\Ws)\ge K(K-1)L/(K-T)$, and
\[
KL+L_Z=H(W_{[K]},Z_{[K]})\ge H(X,\Ws)=H(\Ws)+H(X\,|\,\Ws)\ge L+\frac{K(K-1)}{K-T}L,
\]
so $L_Z\ge(K(K-1)/(K-T)-(K-1))L=T(K-1)L/(K-T)$.

\emph{Randomness outside a set of parties.} Let $\cA\subseteq[K]$ with $|\cA|=M$, $1\le M\le T$; the case $\cA=\emptyset$ is the bound on $R_Z$ above. As in Example~\ref{ex:converse}, apply the coalition bound only to honest sets $\cH\subseteq[K]\setminus\cA$ with $|\cH|=K-T$, whose coalitions $\cT=[K]\setminus\cH$ contain $\cA$, and let $Q=(X_{\cE(\cA)},\Ws,W_{\cA},Z_{\cA})$. The links of $\cE[\cA^c]$ outside $\cE[\cH]$ and the links of $\cE(\cA)$ lie in $\cE(\cT)$, so the coalition bound gives
\[
H\big(X_{\cE[\cH]}\,\big|\,X_{\cE[\cA^c]\setminus\cE[\cH]},Q\big)\ge(K-T-1)L.
\]
Each link of $\cE[\cA^c]$ lies in $\binom{K-M-2}{K-T-2}$ of the $\binom{K-M}{K-T}$ sets $\cH$, so~\eqref{eq:shearer-dual} of Lemma~\ref{lem:cover} with $\cI=\cE[\cA^c]$ gives $H(X_{\cE[\cA^c]}\,|\,Q)\ge(K-M)(K-M-1)L/(K-T)$, and
\begin{align*}
(K-M)L+\sum_{k\notin\cA}L_{Z_k}&=H\big((W_k,Z_k)_{k\notin\cA}\,\big|\,W_{\cA},Z_{\cA}\big)\ge H(X,\Ws\,|\,W_{\cA},Z_{\cA})\\
&=H(X_{\cE(\cA)},\Ws\,|\,W_{\cA},Z_{\cA})+H(X_{\cE[\cA^c]}\,|\,Q)\\
&\ge L+\frac{(K-M)(K-M-1)}{K-T}L,
\end{align*}
where the first inequality holds because $X$ and $\Ws$ are functions of all inputs and randomness by~\eqref{eq:message}, and the last uses $H(\Ws\,|\,W_{\cA},Z_{\cA})=L$, since $M<K$. As $(K-M)(K-M-1)/(K-T)+1-(K-M)=(T-M)(K-M-1)/(K-T)$, this is the constraint of~\eqref{eq:region} for $\cA$.\qed

\section{Achievability}\label{sec:scheme}

This section reduces achievability to the extreme points (Section~\ref{sec:extreme}), works through a scheme at an extreme point for $(K,T)=(5,2)$ (Section~\ref{sec:ex-scheme}), and gives a scheme at every extreme point (Section~\ref{sec:scheme-main}). For $T=0$, sending every input to the destination achieves $R_X=K-1$ without randomness, so let $T\ge1$; recall that $K-T\ge2$.

\subsection{Extreme Points}\label{sec:extreme}

The randomness constraints of~\eqref{eq:region} depend on $\cA$ only through its size. Let $g(m)=0$ for $0\le m\le K-T$ and $g(m)=(m-K+T)(m-1)/(K-T)$ for $K-T<m\le K$. For $|\cA|=M\le T$, the right side of~\eqref{eq:region} is $g(K-M)$, and for a set of at most $K-T$ parties $g$ is $0$, so the randomness part of $\cR$ is
\[
\mathcal P\triangleq\Big\{\boldsymbol\mu\in\mathbb R_{\ge0}^K:\ \sum_{k\in\cB}\mu_k\ge g(|\cB|)\ \text{for all }\cB\subseteq[K]\Big\}.
\]
As a polyhedron in $\mathbb R_{\ge0}^K$ that contains every vector that is at least one of its points in every entry, $\mathcal P$ is the set of vectors that are at least a convex combination of its extreme points. The next lemma shows that these are the points~\eqref{eq:extreme}, a known property of such polyhedra~\cite{Edmonds70}; we include a short proof.

\begin{lemma}[Extreme Points]\label{lem:extreme}
For $T\ge1$, the extreme points of $\mathcal P$ are the points~\eqref{eq:extreme}.
\end{lemma}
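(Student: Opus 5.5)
The constraint function $g$ is supermodular on cardinalities: $g(m)$ is convex in $m$ with $g(0)=0$. The increments are $\delta(m)=g(m)-g(m-1)$, which equal $0$ for $m\le K-T$ and, for $m>K-T$, satisfy
\[
\delta(m)=\frac{(m-K+T)(m-1)-(m-1-K+T)(m-2)}{K-T}=\frac{2m-K+T-2}{K-T}.
\]
These are nondecreasing in $m$. So $\mathcal P$ is the ``contrapolymatroid'' $\{\boldsymbol\mu\ge0:\mu(\cB)\ge f(\cB)\}$ with $f(\cB)=g(|\cB|)$ supermodular. The plan is to prove directly the Edmonds-type statement that its extreme points are the greedy vectors. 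For each ordering $\sigma$ of $[K]$, the greedy vector gives the $m$-th party from the \emph{end} the increment $\delta(K-m+1)$. Concretely, the party in position $m$ from the front of the reversed ordering gets $\delta(K-m+1)=(K+T-2m)/(K-T)$ for $m\le T$ and $0$ otherwise. This is exactly \eqref{eq:extreme} with $p_t$ the $t$-th party.

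\textbf{Step 1: feasibility.} For a greedy vector with parties ordered $p_1,\dots,p_K$, a set $\cB$ of size $b$ has $\mu(\cB)$ at least the sum of the $b$ smallest entries, since the entries decrease along the order. The $b$ smallest entries are $\delta(1),\dots,\delta(b)$, which sum to $g(b)$. So every greedy vector lies in $\mathcal P$.

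\textbf{Step 2: greedy vectors are extreme.} The tight sets $\{p_{m},\dots,p_K\}$, $m=1,\dots,K$, are a chain whose constraint rows are linearly independent. Each such constraint holds with equality by telescoping, $\sum_{s\ge m}\mu_{p_s}=g(K-m+1)$. The vector is therefore the unique solution of $K$ independent tight constraints, hence a vertex.

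\textbf{Step 3: no other extreme points.} This is the main step. I would use LP duality and the greedy algorithm. Every vertex is the unique minimizer of $\boldsymbol c^\top\boldsymbol\mu$ over $\mathcal P$ for some $\boldsymbol c$, and $\boldsymbol c\ge0$ since $\mathcal P$ is upward closed; perturb $\boldsymbol c$ so the minimizer is unique and the entries are distinct. Order the parties by decreasing $c$, as $p_1,\dots,p_K$ with $c_{p_1}>\dots>c_{p_K}$, and let $\boldsymbol\mu^\ast$ be the corresponding greedy vector. By Abel summation,
\[
\boldsymbol c^\top\boldsymbol\mu=\sum_m (c_{p_m}-c_{p_{m-1}})\,\mu(\{p_m,\dots,p_K\}),\qquad c_{p_0}:=0.
\]
To match signs, I would write it with the differences $c_{p_{m-1}}-c_{p_m}\ge0$ applied to nested suffix sets. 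Each suffix-set term is at least $g(\cdot)$ for any feasible $\boldsymbol\mu$, and equality holds for $\boldsymbol\mu^\ast$, since greedy is tight on every suffix. Hence $\boldsymbol\mu^\ast$ is optimal, and by uniqueness the vertex equals $\boldsymbol\mu^\ast$.

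The delicate points are getting the orientation right: the largest randomness goes to the party with the \emph{smallest} cost. Ties in $\boldsymbol c$ can be handled by noting that any vertex is the unique minimizer for an open set of $\boldsymbol c$, so a generic choice is available. Finally, the zero entries beyond position $T$ mean distinct orderings can give the same point, which only identifies orderings differing beyond position $T$.
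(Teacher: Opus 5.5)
Your overall route is sound, and it differs from the paper's in the main step. Your Steps 1 and 2 are the paper's argument with the order reversed: the paper uses the prefix chain $[m]$ for the nondecreasing vector $\boldsymbol d$, and you use suffixes. For ``no other extreme points'' the paper argues by uncrossing. Supermodularity of $\cB\mapsto g(|\cB|)$ makes the tight sets at a vertex closed under union and intersection, so every tight set is a union of blocks of a maximal chain of tight sets. Hence the $K$ independent tight rows lie in the span of the chain rows, which forces a full chain and the entries $d(1),\ldots,d(K)$. You instead expose the vertex by a cost vector and show the greedy vector is optimal by Abel summation. This avoids uncrossing, uses the monotonicity of the increments only through feasibility in Step 1, and gives as a byproduct the minimizer of every weighted randomness cost $\sum_k c_kR_{Z_k}$. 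The paper's argument stays within linear algebra and needs no optimization. You also do not need the perturbation: take $\boldsymbol c$ to be the sum of the $K$ independent tight rows at the vertex. It is nonnegative and has a unique minimizer, and ties are harmless because any nondecreasing order gives nonnegative coefficients.

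As written, however, Step 3 has the orientation backwards, and your proposed sign fix does not repair it. Your identity $\boldsymbol c^\top\boldsymbol\mu=\sum_m(c_{p_m}-c_{p_{m-1}})\,\mu(\{p_m,\ldots,p_K\})$ is correct. But with $c_{p_1}>\cdots>c_{p_K}$ its coefficients for $m\ge2$ are negative, so the constraints $\mu(\{p_m,\ldots,p_K\})\ge g(K-m+1)$ give no lower bound. Putting the positive differences $c_{p_{m-1}}-c_{p_m}$ on the same suffix sums flips the sign of those terms, so the expression is no longer $\boldsymbol c^\top\boldsymbol\mu$. Correspondingly, your greedy vector for this order gives the largest entry to the most expensive party and is not optimal. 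For $K=3$, $T=1$ one has $\mathcal P=\{\boldsymbol\mu\ge0:\mu_1+\mu_2+\mu_3\ge1\}$; with $\boldsymbol c=(3,2,1)$ the greedy vector is $(1,0,0)$ with cost $3$, whereas $(0,0,1)$ costs $1$.

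The fix is what your closing remark says. Order the parties by increasing cost, $0\le c_{p_1}\le\cdots\le c_{p_K}$. Then all coefficients $c_{p_1}$ and $c_{p_m}-c_{p_{m-1}}$ are nonnegative, and the greedy vector attains the bound: it is tight on every suffix and gives the cheapest party $p_1$ the entry $(K+T-2)/(K-T)$. Equivalently, keep the decreasing order and expand over the prefixes $\{p_1,\ldots,p_m\}$ with coefficients $c_{p_m}-c_{p_{m+1}}$, $c_{p_{K+1}}:=0$, using the greedy vector that is tight on the prefixes. With that change the proof is complete.
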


\begin{proof}
Let $d(m)\triangleq g(m)-g(m-1)$, which is $0$ for $m\le K-T$ and $(2m-2-K+T)/(K-T)$ for $m>K-T$. Then $d(1)\le\cdots\le d(K)$, $g(m)=d(1)+\cdots+d(m)$, and, since $d(K-t+1)=(K+T-2t)/(K-T)$, $t\in[T]$, the points~\eqref{eq:extreme} are the permutations of $\boldsymbol d=(d(1),\ldots,d(K))$. Write $\mu(\cB)=\sum_{k\in\cB}\mu_k$, and call $\cB$ tight at $\boldsymbol\mu\in\mathcal P$ if $\mu(\cB)=g(|\cB|)$; since $g(1)=0$, the constraint $\mu_k\ge0$ is the constraint for $\cB=\{k\}$.

\emph{The points~\eqref{eq:extreme} are extreme.} Since $\boldsymbol d$ is nondecreasing, any $m$ of its entries add up to at least $d(1)+\cdots+d(m)=g(m)$. Hence $\boldsymbol d\in\mathcal P$, and the $K$ sets $[m]$, $m\in[K]$, are tight at $\boldsymbol d$. Their constraints are linearly independent; $\boldsymbol d$ is therefore an extreme point, and by symmetry every permutation of $\boldsymbol d$ is one as well.

\emph{There are no others.} Let $\boldsymbol\mu$ be an extreme point, at which $K$ linearly independent constraints are tight. The set function $\cB\mapsto g(|\cB|)$ is supermodular:
\[
g(|\cB\cup\cB'|)-g(|\cB'|)=\sum_{m=|\cB'|+1}^{|\cB\cup\cB'|}d(m)\ge\sum_{m=|\cB\cap\cB'|+1}^{|\cB|}d(m)=g(|\cB|)-g(|\cB\cap\cB'|),
\]
since both sums have $|\cB|-|\cB\cap\cB'|$ terms, $|\cB'|\ge|\cB\cap\cB'|$ and $\boldsymbol d$ is nondecreasing. If $\cB$ and $\cB'$ are tight, then
\[
\mu(\cB\cup\cB')+\mu(\cB\cap\cB')=\mu(\cB)+\mu(\cB')=g(|\cB|)+g(|\cB'|)\le g(|\cB\cup\cB'|)+g(|\cB\cap\cB'|),
\]
and as $\mu(\cB\cup\cB')\ge g(|\cB\cup\cB'|)$ and $\mu(\cB\cap\cB')\ge g(|\cB\cap\cB'|)$, both $\cB\cup\cB'$ and $\cB\cap\cB'$ are tight. Take a maximal chain of tight sets $\emptyset=\mathcal S_0\subset\mathcal S_1\subset\cdots\subset\mathcal S_J$. Every tight $\cB$ is a union of blocks $\mathcal S_j\setminus\mathcal S_{j-1}$. Indeed, for $j\in[J]$ the set $(\cB\cup\mathcal S_{j-1})\cap\mathcal S_j$ is tight and lies between $\mathcal S_{j-1}$ and $\mathcal S_j$; by maximality of the chain it equals one of them, that is, $\cB$ contains the block $\mathcal S_j\setminus\mathcal S_{j-1}$ entirely or not at all. Likewise $\cB\cup\mathcal S_J$ is tight and contains $\mathcal S_J$, hence equals $\mathcal S_J$, and $\cB\subseteq\mathcal S_J$. With $\mu(\mathcal S_0)=0$,
\[
\mu(\cB)=\sum_{j\in[J]:\ \mathcal S_j\setminus\mathcal S_{j-1}\subseteq\cB}\big(\mu(\mathcal S_j)-\mu(\mathcal S_{j-1})\big)
\]
for every $\boldsymbol\mu$, so the left side of the constraint of $\cB$ is a linear combination, with coefficients $0$ and $\pm1$, of the left sides $\mu(\mathcal S_1),\ldots,\mu(\mathcal S_J)$ of the constraints of the chain. The $K$ linearly independent tight constraints therefore lie in the span of these $J$ constraints, so $J\ge K$; a chain of distinct nonempty subsets of $[K]$ has $J\le K$, and hence $J=K$ and $|\mathcal S_j|=j$. The entry of $\boldsymbol\mu$ at the party in $\mathcal S_j\setminus\mathcal S_{j-1}$ is then $g(j)-g(j-1)=d(j)$, and $\boldsymbol\mu$ is a permutation of $\boldsymbol d$.
\end{proof}

Rates are upper bounds, and time sharing between schemes with the same communication rate keeps that rate, so by Lemma~\ref{lem:extreme} it suffices to achieve each extreme point~\eqref{eq:extreme} together with $R_X=K(K-1)/(K-T)$. The scheme of Section~\ref{sec:scheme-main} works for any choice of the parties $p_1,\ldots,p_T$ that hold randomness and of the destination, because the role of the destination enters only phase 2. Hence it achieves every extreme point, of each of the $T+1$ types: in type $t\in[T]$ the destination is the party $p_t$ with randomness rate $(K+T-2t)/(K-T)$, and in type $0$ the destination has no randomness. Example~\ref{ex:scheme} is of type $0$ for $(K,T)=(5,2)$.

\subsection{Example: An Extreme Point for $(K,T)=(5,2)$}\label{sec:ex-scheme}

We first work through the scheme of Figure~\ref{fig:scheme}, for the case $(K,T)=(5,2)$ of Example~\ref{ex:converse} and the extreme point $(5/3,1,0,0,0)$, at which the destination has no randomness. As described in Section~\ref{sec:intro}, the scheme has two phases: phase 1 prepares zero-sum noise for each of the three sums, with the randomness of parties $p_1=1$ and $p_2=2$, the \emph{suppliers}, and without the inputs, and phase 2 routes the three noisy sums, sum $\ell$ through party $c_\ell=\ell+2$; parties $3,4,5$ are the \emph{collectors}. Security rests on both phases: a coalition sees the noisy input symbols of the other parties only in the sums collected by its members, and in these sums the noise is a full-rank function of the randomness that the coalition does not see; a coalition that contains the destination also receives partial sums, which are determined by the sum and its own noisy input.

\begin{example}\label{ex:scheme}
Let $(K,T)=(5,2)$ and $L=3$, and let $\boldsymbol\Gamma$ be a $3\times2$ Cauchy matrix over $\Fq$, $q\ge5$, whose square submatrices are all invertible; in particular, for $\mathcal J\subseteq[3]$ with $|\mathcal J|=2$, the $2\times2$ submatrix $\boldsymbol\Gamma(\mathcal J,[2])$ with rows in $\mathcal J$ and columns in $[2]$ is invertible; in general, $\boldsymbol\Gamma(\mathcal J,\cU)$ denotes the submatrix of $\boldsymbol\Gamma$ with rows in $\mathcal J$ and columns in $\cU$, in increasing order.

\emph{Phase 1: zero-sum noise.} Each supplier $t\in[2]$ deals an additive sharing of zero: the shares $s_{t,k}$ to the parties $k\in[5]$, with $\sum_{k\in[5]}s_{t,k}=0$, row $t$ of Figure~\ref{fig:shares}. Party $1$ draws $s_{1,2}$ and $s_{2,1}$, the two shares between parties $1$ and $2$, and $s_{1,3},s_{1,4},s_{1,5}$; party $2$ draws $s_{2,3},s_{2,4},s_{2,5}$; all are independent and uniform, and parties $3,4,5$ have no randomness. The remaining shares are
\[
s_{1,1}=-(s_{1,2}+s_{1,3}+s_{1,4}+s_{1,5}),\qquad s_{2,2}=-(s_{2,1}+s_{2,3}+s_{2,4}+s_{2,5}).
\]
Drawing $s_{2,1}$ at party $1$ reduces the randomness of party $2$. The \emph{share vector} of party $k$ is column $k$ of Figure~\ref{fig:shares}, $\boldsymbol s_k=(s_{1,k},s_{2,k})$, and its \emph{noise} is $S_k=\boldsymbol\Gamma\boldsymbol s_k$, whose $\ell$-th symbol $S_k(\ell)$ is its noise for sum $\ell\in[3]$. The noise adds up to zero because the shares of each supplier do. In round 1, party $1$ sends $s_{1,2},s_{2,1}$ to party $2$, and parties $1$ and $2$ send $s_{1,k}$ and $s_{2,k}$ to party $k\in\{3,4,5\}$ (Figure~\ref{fig:scheme}(a)). Then every party knows its share vector and hence its noise.
\begin{figure}[t]
\centering
\begin{tikzpicture}[x=1.3cm,y=0.7cm,font=\small]
\foreach \k in {2,3,4,5} \fill[blue!18] (\k-1,0) rectangle (\k,-1);
\fill[blue!18] (0,-1) rectangle (1,-2);
\foreach \k in {3,4,5} \fill[orange!30] (\k-1,-1) rectangle (\k,-2);
\fill[gray!35] (0,0) rectangle (1,-1);
\fill[gray!35] (1,-1) rectangle (2,-2);
\draw (0,0) grid[xstep=1,ystep=1] (5,-2);
\draw[thick] (0,0) rectangle (5,-2) (2,0)--(2,-2);
\foreach \t in {1,2} \foreach \k in {1,...,5} \node at (\k-0.5,-\t+0.5) {$s_{\t,\k}$};
\foreach \k in {1,...,5} {\node[above] at (\k-0.5,0) {$k=\k$}; \node[below] at (\k-0.5,-2) {$\boldsymbol s_{\k}$};}
\node[anchor=base] at (1,0.85) {suppliers}; \node[anchor=base] at (3.5,0.85) {collectors};
\node[left] at (0,-0.5) {$t=1$}; \node[left] at (0,-1.5) {$t=2$};
\node[right] at (5,-0.5) {row sum $=0$}; \node[right] at (5,-1.5) {row sum $=0$};
\begin{scope}[shift={(0,-3.2)}]
\fill[blue!18] (0,0) rectangle (0.27,-0.43); \node[right] at (0.25,-0.215) {drawn by party $1$};
\fill[orange!30] (2.6,0) rectangle (2.87,-0.43); \node[right] at (2.85,-0.215) {drawn by party $2$};
\fill[gray!35] (5.2,0) rectangle (5.47,-0.43); \node[right] at (5.45,-0.215) {$s_{t,t}=-\sum_{k\ne t}s_{t,k}$};
\end{scope}
\end{tikzpicture}
\caption{The shares in Example~\ref{ex:scheme}: row $t$ is the additive sharing of zero dealt by supplier $t$, and column $k$ is the share vector $\boldsymbol s_k$. In general, supplier $p_t$ draws its shares to the collectors and both shares between $p_t$ and $p_{t'}$, $t<t'$.}
\label{fig:shares}
\end{figure}

\emph{Phase 2: routing.} Let $\tilde W_k(\ell)=W_k(\ell)+S_k(\ell)$ be the \emph{noisy input} symbols (Figure~\ref{fig:scheme}(b)--(d)). In round 2, for every $\ell\in[3]$, every party $k\notin\{5,\ell+2\}$ sends $\tilde W_k(\ell)$ to party $\ell+2$. In round 3, party $\ell+2$, $\ell\in[2]$, sends the \emph{partial sum} $P_\ell=\sum_{k\in[4]}\tilde W_k(\ell)$ to party $5$.

\emph{Correctness.} Party $5$ outputs $W_{\Sigma}(\ell)=P_\ell+\tilde W_5(\ell)$, $\ell\in[2]$, and $W_{\Sigma}(3)=\sum_{k\in[5]}\tilde W_k(3)$. Since the noise adds up to zero over the parties, $\sum_{k\in[5]}\tilde W_k(\ell)=W_\Sigma(\ell)$ for every $\ell\in[3]$.

\emph{Security.} Consider first the coalition $\cT=\{4,5\}$, which contains the destination. The proof uses the following sets (Figure~\ref{fig:rank}). The coalition collects the sums $\mathcal J=\{2,3\}$, so it sees noisy input symbols only in these sums, the rows $\mathcal J$ of $\boldsymbol\Gamma$. It misses the suppliers $\cU=\{1,2\}$, so it does not see their shares to the honest parties, the columns $\cU$ of $\boldsymbol\Gamma$. Of the honest parties $\cH=\{1,2,3\}$, we omit $k_0=2$, whose noisy input symbols follow from the sum, and keep $\cH'=\{1,3\}$. Let $Y_{\cT}=(Y_4(1{:}3),Y_5(1{:}3))$ be the messages that the coalition receives and $V_{\cT}=(W_4,W_5,\Ws)$ the conditioning in~\eqref{eq:security}, as $Z_4,Z_5$ are empty. By the definition of mutual information,
\[
I\big(W_{[5]};Y_{\cT}\,\big|\,V_{\cT}\big)=H\big(Y_{\cT}\,\big|\,V_{\cT}\big)-H\big(Y_{\cT}\,\big|\,W_{[5]}\big).
\]

\begin{table}[t]
\centering\footnotesize
\renewcommand{\arraystretch}{1.2}
\begin{tabular}{lll}
\toprule
Round & Example~\ref{ex:scheme}, $\cT=\{4,5\}$ & General\\
\midrule
1 & $s_{1,4},s_{1,5},s_{2,4},s_{2,5}$ & (i) shares from honest suppliers\\
2 & $\tilde W_1(2),\tilde W_1(3),\tilde W_3(2),\tilde W_3(3)$ & (ii) $\tilde W_k(\ell)$, $k\in\cH'$, $\ell\in\mathcal J$\\
2 & $\tilde W_2(\ell)=\Ws(\ell)-\sum_{k\ne2}\tilde W_k(\ell)$, $\ell=2,3$ & $\tilde W_{k_0}(\ell)=\Ws(\ell)-\sum_{k\ne k_0}\tilde W_k(\ell)$, $\ell\in\mathcal J$, if $K\in\cT$\\
3 & $P_1=\Ws(1)-\tilde W_5(1)$ & $P_\ell=\Ws(\ell)-\tilde W_K(\ell)$, $\ell\notin\mathcal J$, if $K\in\cT$\\
1--3 & $\tilde W_4(3),P_2$, known & symbols sent by parties in $\cT$, known\\
\bottomrule
\end{tabular}
\caption{The observation $Y_{\cT}$ of a coalition $\cT$: (i) and (ii) form $Y'_{\cT}$, and the other symbols are functions of $V_{\cT}$ and $Y'_{\cT}$. If $K\notin\cT$, then $k_0=K$ sends no noisy input symbols or partial sums.}\label{tab:observation}
\end{table}

\emph{Counting.} Table~\ref{tab:observation} lists the symbols that the coalition receives. Parties $4,5$ know their share vectors from the shares they receive, hence their noisy input symbols given $W_4,W_5$, so, given $V_{\cT}$, all these symbols are functions of the $8$ symbols
\[
Y'_{\cT}=\big(s_{1,4},s_{1,5},s_{2,4},s_{2,5},\tilde W_1(2),\tilde W_1(3),\tilde W_3(2),\tilde W_3(3)\big).
\]
Hence the first term is at most $8$.

\begin{figure}[t]
\centering
\begin{tikzpicture}[x=1cm,y=1cm,font=\footnotesize]
\fill[red!25] (0,-0.6) rectangle (2,-1.8);
\draw[thick] (0,0) rectangle (2,-1.8);
\draw (0,-0.6)--(2,-0.6);
\node[align=center] at (1,-1.2) {$\boldsymbol\Gamma(\mathcal J,\cU)$\\[-1pt]$\mathcal J=\{2,3\}$};
\node[above] at (0.5,0) {$t=1$};\node[above] at (1.5,0) {$t=2$};
\foreach \l in {1,2,3} \node[left] at (0,-0.6*\l+0.3) {$\ell=\l$};
\node[below] at (1,-1.9) {$\boldsymbol\Gamma$};
\draw[decorate,decoration={brace,amplitude=4pt}] (0.05,0.5)--(1.95,0.5) node[midway,above=4pt] {$\cU$};
\node at (2.35,-0.9) {\large$\times$};
\begin{scope}[shift={(2.7,-0.3)}]
\fill[gray!20] (0,0) rectangle (2,-1.2);
\fill[green!20] (3,0) rectangle (5,-1.2);
\draw[thick] (0,0) rectangle (5,-1.2);
\draw (2,0)--(2,-1.2) (3,0)--(3,-1.2);
\node at (1,-0.6) {known};
\node[align=center,font=\scriptsize] at (2.5,-0.6) {row\\[-2pt]sum};
\node[align=center] at (4,-0.6) {independent,\\ uniform};
\foreach \k/\x in {4/0.5,5/1.5,2/2.5,1/3.5,3/4.5} \node[above] at (\x,0) {$\k$};
\node[above] at (1,0.35) {$\cT$};\node[above] at (2.5,0.35) {$k_0$};\node[above] at (4,0.35) {$\cH'$};
\node[right] at (5,-0.3) {$t=1$};\node[right] at (5,-0.9) {$t=2$};
\node[below] at (2.5,-1.3) {shares $s_{t,k}$};
\end{scope}
\node at (8.95,-0.9) {\large$=$};
\begin{scope}[shift={(9.3,0)}]
\fill[gray!20] (0,0) rectangle (2,-1.8);
\fill[red!25] (3,-0.6) rectangle (5,-1.8);
\draw[thick] (0,0) rectangle (5,-1.8);
\draw (0,-0.6)--(5,-0.6) (2,0)--(2,-1.8) (3,0)--(3,-1.8);
\node at (1,-0.9) {known};
\node at (4,-0.3) {not seen};
\node[align=center] at (4,-1.2) {seen in\\[-1pt]$\tilde W_k(\ell)$};
\node[align=center,font=\scriptsize] at (2.5,-1.2) {from\\[-2pt]the\\[-2pt]sum};
\foreach \k/\x in {4/0.5,5/1.5,2/2.5,1/3.5,3/4.5} \node[above] at (\x,0) {$\k$};
\node[above] at (1,0.35) {$\cT$};\node[above] at (2.5,0.35) {$k_0$};\node[above] at (4,0.35) {$\cH'$};
\foreach \l in {1,2,3} \node[right] at (5,-0.6*\l+0.3) {$\ell=\l$};
\node[below] at (2.5,-1.9) {noise $S_k(\ell)$};
\end{scope}
\end{tikzpicture}
\caption{What the coalition $\cT=\{4,5\}$ of Example~\ref{ex:scheme} sees, given the inputs, with the columns reordered. It sees the noisy input symbols $\tilde W_k(\ell)=W_k(\ell)+S_k(\ell)$ only in the red block: the rows $\mathcal J=\{2,3\}$, the sums it collects, and the columns $\cH'=\{1,3\}$. Their noise is $\boldsymbol\Gamma(\mathcal J,\cU)$, with $\cU=\{1,2\}$ indexing the suppliers that $\cT$ misses, times the shares in the green block, which $\cT$ does not see and which are independent and uniform; since $\boldsymbol\Gamma(\mathcal J,\cU)$ is invertible, the seen symbols are independent and uniform. Gray: known to $\cT$; the shares of $k_0$ follow from those of $\cH'$ and the known row sums $-\sum_{k\in\cT}s_{t,k}$, and its noise column follows from the sum. In general, $\boldsymbol\Gamma(\mathcal J,\cU)$ has at least as many columns as rows.}
\label{fig:rank}
\end{figure}

\emph{Rank.} The coalition sees the shares $s_{1,4},s_{1,5},s_{2,4},s_{2,5}$, which are uniform given the inputs, and not the shares to parties $1,2,3$. We use an elementary fact: in a uniform additive sharing with a known sum, any shares except one are independent and uniform, since they determine the remaining one. Given the inputs and the seen shares, the shares $s_{t,1},s_{t,2},s_{t,3}$ of supplier $t\in[2]$ to the honest parties $1,2,3$ are uniform with the known sum $-(s_{t,4}+s_{t,5})$ (Figure~\ref{fig:rank}). Omitting party $2$, the share vectors $\boldsymbol s_1$ and $\boldsymbol s_3$ are independent and uniform in $\mathbb F_q^2$. For $k\in\{1,3\}$,
\[
(\tilde W_k(2),\tilde W_k(3))=(W_k(2),W_k(3))+\boldsymbol\Gamma(\{2,3\},[2])\boldsymbol s_k,
\]
and $\boldsymbol\Gamma(\{2,3\},[2])$ is invertible, so the four symbols $\tilde W_1(2),\tilde W_1(3),\tilde W_3(2),\tilde W_3(3)$ are independent and uniform, also given the seen shares. Hence $H(Y_{\cT}\,|\,W_{[5]})\ge H(Y'_{\cT}\,|\,W_{[5]})=4+4=8$, the $4$ seen shares and the $4$ noisy input symbols, which matches the count, and the mutual information is $0$, which is~\eqref{eq:security}.

Every other coalition is handled in the same way, and only the submatrix of $\boldsymbol\Gamma$ changes. Each party is either a supplier or a collector, so each member of a coalition either removes a column from $\boldsymbol\Gamma$, if it is a supplier, since it knows its own shares, or adds a row, if it is a collector, the sum that it collects. Starting from the $T=2$ columns and no rows, a coalition of at most $2$ parties leaves a submatrix $\boldsymbol\Gamma(\mathcal J,\cU)$ with at least as many columns as rows, which has full row rank. For example, $\cT=\{1,3\}$ gives $\boldsymbol\Gamma(\{1\},\{2\})$ and $\cT=\{3,4\}$, which does not contain the destination, gives $\boldsymbol\Gamma(\{1,2\},[2])$; for a coalition without the destination, the omitted party is $k_0=5$, which sends no noisy input symbols.

\emph{Rate.} The shares take $8$ symbols, and each of the three sums takes $4$, one from every party other than the destination (Figure~\ref{fig:scheme}). Hence the scheme sends $20$ symbols for $L=3$, $2$ on every link; party $1$ uses $5$ randomness symbols and party $2$ uses $3$. Thus $R_X=20/3$ and $(R_{Z_1},\ldots,R_{Z_5})=(5/3,1,0,0,0)$, an extreme point of Theorem~\ref{thm:main}.
\end{example}

\subsection{A Scheme at Every Extreme Point}\label{sec:scheme-main}

Fix an extreme point~\eqref{eq:extreme}: distinct parties $p_1,\ldots,p_T$, the suppliers, and the other parties $c_1,\ldots,c_{K-T}$, the collectors, in any order. First let $q\ge K$ and $L=K-T$. For distinct $\alpha_1,\ldots,\alpha_{K-T},\beta_1,\ldots,\beta_T\in\Fq$, let $\boldsymbol\Gamma$ be the $(K-T)\times T$ Cauchy matrix with entries $\boldsymbol\Gamma(\ell,t)=1/(\alpha_\ell-\beta_t)$, whose square submatrices are all invertible, since each is again a Cauchy matrix and has a nonzero determinant.

\emph{Shares and noise.} Supplier $p_t$, $t\in[T]$, deals an additive sharing of zero $s_{t,k}$, $k\in[K]$: the shares $s_{t,k}$, $k\ne p_t$, are independent and uniform, and
\[
s_{t,p_t}=-\sum_{k\ne p_t}s_{t,k},\qquad\text{so that}\qquad\sum_{k\in[K]}s_{t,k}=0.
\]
Supplier $p_t$ draws its shares to the collectors and, for $t<t'$, both shares $s_{t,p_{t'}}$ and $s_{t',p_t}$ between $p_t$ and $p_{t'}$, and all drawn symbols are independent. Party $k$ holds the share vector $\boldsymbol s_k$, with one share from each of the $T$ sharings, and its noise $S_k$ has one symbol $S_k(\ell)$ for each sum $\ell\in[K-T]$:
\[
\boldsymbol s_k=(s_{1,k},\ldots,s_{T,k}),\qquad S_k=\boldsymbol\Gamma\boldsymbol s_k,\quad\text{that is,}\quad S_k(\ell)=\sum_{t\in[T]}\boldsymbol\Gamma(\ell,t)\,s_{t,k}.
\]
Since the shares of every supplier add up to zero, $\sum_{k\in[K]}S_k=0$. Let $\tilde W_k=W_k+S_k$ be the noisy input of party $k$, with symbols $\tilde W_k(\ell)$, $\ell\in[K-T]$.

\emph{Rounds.} The destination is party $K$, and sum $\ell$ is collected at $c_\ell$.
\begin{itemize}
\item Round 1 (phase 1): supplier $p_t$ sends $s_{t,c_\ell}$ to $c_\ell$, $\ell\in[K-T]$, and both $s_{t,p_{t'}}$ and $s_{t',p_t}$ to $p_{t'}$, $t'>t$. Then every party knows its share vector, hence its noise.
\item Round 2 (phase 2): for every $\ell\in[K-T]$, every party $k\notin\{K,c_\ell\}$ sends $\tilde W_k(\ell)$ to $c_\ell$.
\item Round 3 (phase 2): every collector $c_\ell\ne K$ sends the partial sum $P_\ell=\sum_{k\ne K}\tilde W_k(\ell)$ of its own symbol $\tilde W_{c_\ell}(\ell)$ and those it received to $K$.
\end{itemize}

\emph{Correctness.} The destination outputs $W_{\Sigma}(\ell)=\tilde W_K(\ell)+P_\ell$ for every sum $\ell$ that it does not collect, and, if $K=c_{\ell'}$, $W_{\Sigma}(\ell')=\sum_{k\in[K]}\tilde W_k(\ell')$ from its own symbol $\tilde W_K(\ell')$ and those it received. Each output equals $\sum_k\tilde W_k(\ell)=\sum_kW_k(\ell)$ for its sum $\ell$, because the noise adds up to zero over the parties.

\emph{Security.} Consider a coalition $\cT$ with $|\cT|\le T$. Let $\mathcal J=\{\ell:c_\ell\in\cT\}$ index the sums collected in $\cT$, the only sums in which $\cT$ sees noisy input symbols, and $\cU=\{t:p_t\notin\cT\}$ index the honest suppliers, whose shares to the honest parties $\cT$ does not see. Let $\cH=[K]\setminus\cT$ be the honest parties, $|\cH|\ge K-T$, let $k_0\in\cH$ be the destination if $K\in\cH$, which sends no noisy input symbols, and otherwise a fixed honest party, whose noisy input symbols follow from the sum, and let $\cH'=\cH\setminus\{k_0\}$. Since every party is a supplier or a collector, each member of $\cT$ removes one of the $T$ columns of $\boldsymbol\Gamma$ or adds one row: $\cT$ contains $T-|\cU|$ suppliers and $|\mathcal J|$ collectors, so $|\cU|-|\mathcal J|=T-|\cT|\ge0$. Let $Y_{\cT}\triangleq(Y_k(1{:}N))_{k\in\cT}$ be the messages that $\cT$ receives, and let $V_{\cT}$ be the conditioning in~\eqref{eq:security}, $V_{\cT}=(W_{\cT},Z_{\cT},\Ws)$ if $K\in\cT$ and $V_{\cT}=(W_{\cT},Z_{\cT})$ otherwise. By the definition of mutual information,
\begin{equation}\label{eq:mi}
I\big(W_{[K]};Y_{\cT}\,\big|\,V_{\cT}\big)=H(Y_{\cT}\,|\,V_{\cT})-H\big(Y_{\cT}\,\big|\,W_{[K]},V_{\cT}\big).
\end{equation}

\emph{Counting.} Let $Y'_{\cT}$ consist of (i) the shares that $\cT$ receives from honest suppliers in round 1, and (ii) the noisy input symbols $\tilde W_k(\ell)$, $k\in\cH'$, $\ell\in\mathcal J$. Every symbol of $Y_{\cT}$ is in $Y'_{\cT}$ or a function of $V_{\cT}$ and $Y'_{\cT}$ (Table~\ref{tab:observation}), as the parties in $\cT$ know their share vectors, hence their noisy inputs. Hence $H(Y_{\cT}\,|\,V_{\cT})\le|Y'_{\cT}|$, the number of symbols in $Y'_{\cT}$, which is the number of shares in (i) plus $|\mathcal J|\,|\cH'|=|\mathcal J|(K-|\cT|-1)$ symbols in (ii).

\emph{Rank.} Given the inputs and $V_{\cT}$, the symbols $Y'_{\cT}$ are a linear function of the randomness of the honest suppliers, which is uniform and independent of the inputs and of $V_{\cT}$, plus a known vector, so the second term of~\eqref{eq:mi} is at least $H(Y'_{\cT}\,|\,W_{[K]},V_{\cT})$, the rank of this function. The shares (i) are themselves randomness symbols. Given these shares, for $t\in\cU$ the shares $s_{t,k}$, $k\in\cH\setminus\{p_t\}$, are unseen, independent and uniform, and $s_{t,p_t}$ completes the known sum $-\sum_{k\in\cT}s_{t,k}$, independently over $t\in\cU$; hence any $|\cH|-1$ of the shares $s_{t,k}$, $k\in\cH$, are independent and uniform. Omitting $k_0$, the vectors $(s_{t,k})_{t\in\cU}$, $k\in\cH'$, are independent and uniform in $\Fq^{|\cU|}$, and for $k\in\cH'$,
\[
\big(\tilde W_k(\ell)\big)_{\ell\in\mathcal J}=\big(W_k(\ell)\big)_{\ell\in\mathcal J}+\boldsymbol\Gamma(\mathcal J,\cU)\,(s_{t,k})_{t\in\cU}+\text{a known vector}.
\]
Since $|\cU|-|\mathcal J|=T-|\cT|$, the matrix $\boldsymbol\Gamma(\mathcal J,\cU)$ is square and invertible if $|\cT|=T$, and otherwise has more columns than rows, any $|\mathcal J|$ of which form an invertible matrix; in both cases it has full row rank, so the vectors $\boldsymbol\Gamma(\mathcal J,\cU)\,(s_{t,k})_{t\in\cU}$, $k\in\cH'$, are independent and uniform in $\Fq^{|\mathcal J|}$. Hence the shares (i), which are independent and uniform, contribute their number to the rank, and given them the $|\mathcal J|\,|\cH'|$ symbols (ii) contribute $|\mathcal J|\,|\cH'|$; the rank is $|Y'_{\cT}|$, the count above, the mutual information~\eqref{eq:mi} is $0$, and~\eqref{eq:security} holds for every coalition.

\emph{Rate.} Supplier $p_t$ draws $K-T$ shares for the collectors and $2(T-t)$ shares for the suppliers $p_{t'}$, $t'>t$, that is, $K+T-2t$ randomness symbols. Every link carries $2$ symbols (Table~\ref{tab:links}), so the scheme sends $2\binom K2=K(K-1)$ symbols. Dividing by $L=K-T$, $R_X=K(K-1)/(K-T)$ and the randomness rates are the extreme point~\eqref{eq:extreme}.

\begin{table}[t]
\centering\footnotesize
\renewcommand{\arraystretch}{1.2}
\begin{tabular}{lllll}
\toprule
Link & Round 1 & Round 2 & Round 3 & Total\\
\midrule
$p_t$ and $p_{t'}$, $t<t'$ & $s_{t,p_{t'}}$, $s_{t',p_t}$ & & & $2$\\
$p_t\ne K$ and $c_\ell$ & $s_{t,c_\ell}$ & $\tilde W_{p_t}(\ell)$ & & $2$\\
$p_t=K$ and $c_\ell$ & $s_{t,c_\ell}$ & & $P_\ell$ & $2$\\
$c_\ell$ and $c_{\ell'}$, both $\ne K$ & & $\tilde W_{c_\ell}(\ell')$, $\tilde W_{c_{\ell'}}(\ell)$ & & $2$\\
$c_\ell$ and $c_{\ell'}=K$ & & $\tilde W_{c_\ell}(\ell')$ & $P_\ell$ & $2$\\
\bottomrule
\end{tabular}
\caption{The symbols on each link in the scheme of Section~\ref{sec:scheme-main}, by the roles of its two endpoints; $K$ is the destination.}\label{tab:links}
\end{table}

\emph{Field.} If $q<K$, choose $B$ with $q^B\ge K$, run the scheme over $\F_{q^B}$, and represent each symbol of $\F_{q^B}$ by $B$ symbols of $\Fq$ in a fixed basis. Then $L=(K-T)B$, and all lengths and amounts of randomness scale by $B$, so the rates are unchanged, and every operation is $\Fq$-linear. This completes the proof of Theorem~\ref{thm:main}.\qed

\section{Generalizations}\label{sec:gen}

This section shows that the techniques of Sections~\ref{sec:converse} and~\ref{sec:scheme} settle two generalizations: output to every party (Section~\ref{sec:all}) and a server without input that computes the sum (Section~\ref{sec:server-sub}).

\subsection{Output to Every Party}\label{sec:all}

When every party recovers the sum, the problem is \emph{secure all-reduce}, the secure version of all-reduce, which synchronizes, for example, gradients in decentralized training~\cite{ZhouSun26}. This is also the convention of perfect $T$-privacy against passive coalitions in the works on private computation of the sum~\cite{BDPV99,KM97,GR05,GIS22}: every party recovers the sum, and every coalition may learn the sum but nothing more about the inputs. With every party recovering the sum, the optimal communication rate increases by exactly $K-1$, and the randomness constraints are unchanged.

\begin{theorem}\label{thm:all}
In the model of Section~\ref{sec:model}, let every party recover the sum, $H(\Ws\,|\,W_k,Z_k,Y_k(1{:}N))=0$ for all $k\in[K]$, and let every coalition learn nothing about the inputs beyond what it already knows and the sum:
\begin{equation}\label{eq:security-all}
I\big((W_k)_{k\in[K]};(Y_k(1{:}N))_{k\in\cT}\,\big|\,\Ws,(W_k,Z_k)_{k\in\cT}\big)=0,\quad\forall\cT\subseteq[K],\ |\cT|\le T.
\end{equation} For $K\ge2$ and $0\le T\le K-2$, the capacity region is the set of rate tuples $(R_X,R_{Z_1},\ldots,R_{Z_K})\in\mathbb R_{\ge0}^{K+1}$ such that
\[
R_X\ge\frac{K(K-1)}{K-T}+K-1
\]
and the randomness rates satisfy the constraints of~\eqref{eq:region}.
\end{theorem}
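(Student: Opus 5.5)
The plan is to prove the converse and the achievability separately, reusing Theorem~\ref{thm:main} as much as possible. The randomness constraints and the achievability should be routine; the extra $K-1$ in the communication converse is the real work.

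\emph{Randomness converse.} Recovery at every party includes recovery at party $K$, so~\eqref{eq:correct} holds. Also,~\eqref{eq:security-all} implies~\eqref{eq:security-sum}, because replacing $(Y_k(1{:}N))_{k\in\cT}$ by $X_{\cE(\cT)}$ changes nothing, as noted in Section~\ref{sec:model}. The converse of Theorem~\ref{thm:main} uses only~\eqref{eq:message}, \eqref{eq:correct} and~\eqref{eq:security-sum}, so it applies verbatim. This gives all randomness constraints of~\eqref{eq:region} and the intermediate bound $H(X\,|\,\Ws)\ge K(K-1)L/(K-T)$.

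\emph{Achievability.} I would run the scheme of Section~\ref{sec:scheme-main} at any extreme point. Then, in a fourth round, the destination sends the $L$ symbols of $\Ws$ to each of the other $K-1$ parties. This adds $(K-1)L$ symbols and no randomness, which gives the rate $K(K-1)/(K-T)+K-1$ together with every extreme point; time sharing then covers the rest of the region. For security, the observation of a coalition grows only by $\Ws$. Since $\Ws$ is in the conditioning of~\eqref{eq:security-all}, the counting-and-rank argument of Section~\ref{sec:scheme-main}, run with $V_{\cT}=(W_{\cT},Z_{\cT},\Ws)$ for every coalition, goes through unchanged. When $K\notin\cT$, the omitted honest party $k_0$ can be taken to be $K$ exactly as before, and the added sum is known.

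\emph{Communication converse (main obstacle).} I need $L_X\ge H(X\,|\,\Ws)+(K-1)L$, whereas $H(X)\ge H(X\,|\,\Ws)+L$ gains only one copy of the sum. I would therefore bound by directed messages, $L_X\ge\sum_{i\ne j}H(X_{ij})$, and look for a cut argument in which every party other than the one that ``finishes first'' must receive $L$ fresh symbols about $\Ws$ that are not already charged to the coalition bounds. Concretely, I would first try to strengthen the coalition bound to
\[
H(X_{\cE[\cH]}\,|\,X_{\cE(\cT)})\ge(K-T-1)L+c\,L
\]
with $Q$ empty rather than $Q=\Ws$, using that each honest party must recover $\Ws$ from its own links. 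The constant $c$ would be chosen so that averaging with Lemma~\ref{lem:cover} over all honest sets yields exactly the extra $(K-1)L$. The key ingredient would be the argument of Lemma~\ref{lem:recover} applied at every party: since every party $k$ recovers the sum, the messages into $k$, together with $(W_k,Z_k)$, determine $\Ws$.

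If that averaging does not produce the constant $K-1$, my fallback is a round-by-round argument. I would order the parties by the round in which each first can compute $\Ws$ and charge $L$ symbols to the incoming links of each of the $K-1$ later parties. These charges would have to be shown disjoint from the entropy already used in $H(X\,|\,\Ws)$, for instance by conditioning on $\Ws$ only after these parties have recovered it. This bookkeeping is the step I expect to be hardest.
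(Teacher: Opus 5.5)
Your randomness converse and your achievability are correct and match the paper. One small omission: for $T=0$, where Section~\ref{sec:scheme-main} does not apply, send every input to the destination before it broadcasts $\Ws$. The genuine gap is the communication converse, and your main route cannot close it. With every party recovering the sum, the recovery step of Lemma~\ref{lem:recover} shows that $\Ws$ is a function of both $(\Omega_k,X_{\cE_k})$ and $(\bar\Omega_k,X_{\cE_k})$, hence of $X_{\cE_k}$ alone, for every $k$. So for $T\ge1$ the coalition links $X_{\cE(\cT)}$ already determine $\Ws$, $H(X_{\cE[\cH]}\,|\,X_{\cE(\cT)})=H(X_{\cE[\cH]}\,|\,X_{\cE(\cT)},\Ws)$, and $c=0$. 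More fundamentally, averaging coalition bounds through~\eqref{eq:shearer-dual} only lower-bounds the joint entropy $H(X)$, and $L_X\ge H(X)$ is too weak here. In the achieving scheme the extra $(K-1)L$ symbols are $K-1$ copies of $\Ws$, which raise $H(X)$ by at most $L$. So $H(X)\le K(K-1)L/(K-T)+L$ for that scheme, and no valid lower bound on $H(X)$ can contain the term $(K-1)L$ when $K\ge3$. Your fallback with a round ordering is not yet an argument. Knowledge of $\Ws$ can accrue partially over many rounds, and showing that the charges are disjoint from $H(X\,|\,\Ws)$ is exactly the accounting that is missing.

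The missing idea is to charge every message at its receiver, conditioned on the receiver's own data, so that each party's final view carries its own copy of $\Ws$. With $\Omega_k=(W_k,Z_k)$, \eqref{eq:message} shows that $H(X_{\cE_k}\,|\,\Omega_k)$ is at most the length of what party $k$ receives. Hence $L_X\ge\sum_kH(X_{\cE_k}\,|\,\Omega_k)=\sum_kH(\Omega_k,X_{\cE_k})-H(\Omega_{[K]},X)$, the total correlation of the final views. Each view determines $\Ws$, so $H(\Omega_k,X_{\cE_k})\ge L+H(X_{\cE_k}\,|\,\Ws)+H(\Omega_k\,|\,X)$. Also $I(\Omega_k;\bar\Omega_k\,|\,X)=0$, which follows from the first step of Lemma~\ref{lem:recover}, and $\Ws$ is a function of $X$; together these give $H(\Omega_{[K]},X)=L+H(X\,|\,\Ws)+\sum_kH(\Omega_k\,|\,X)$. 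Subtracting leaves $(K-1)L+\sum_kH(X_{\cE_k}\,|\,\Ws)-H(X\,|\,\Ws)$. The primal inequality~\eqref{eq:shearer} with $\cF=\{\cE_k:k\in[K]\}$ and $D=2$ gives $\sum_kH(X_{\cE_k}\,|\,\Ws)\ge2H(X\,|\,\Ws)$. Therefore $L_X\ge(K-1)L+H(X\,|\,\Ws)$, which is your target, and the coalition bound finishes the proof. Your bound by directed messages, $\sum_{i\ne j}H(X_{ij})$, dominates $\sum_kH(X_{\cE_k}\,|\,\Omega_k)$, so it is a usable starting point; what it lacks is this total-correlation step.
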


Hence it is optimal to compute the sum securely at one party and then send it to the other $K-1$ parties: interaction cannot save communication by letting the parties learn the sum along the way. The new step in the converse is a lower bound on communication by the correlation that the messages create among the final views $(\Omega_k,X_{\cE_k})$, $k\in[K]$, of the parties, measured by their total correlation $\sum_kH(\Omega_k,X_{\cE_k})-H(\Omega_{[K]},X)$, where $\Omega_k=(W_k,Z_k)$.

\begin{proof}
\emph{Converse: randomness.} The converse of Theorem~\ref{thm:main} uses only the encoding~\eqref{eq:message}, the recovery of the sum at the destination,~\eqref{eq:correct}, and the security condition given the sum,~\eqref{eq:security-sum}. A protocol in which every party recovers the sum satisfies~\eqref{eq:correct}. It also satisfies~\eqref{eq:security-sum}: the messages that the parties in $\cT$ send are functions of $(W_k,Z_k)_{k\in\cT}$ and of the messages they receive, so~\eqref{eq:security-all} remains true when the received messages are replaced by $X_{\cE(\cT)}$, all messages that the parties in $\cT$ send or receive, which is~\eqref{eq:security-sum}. Hence the randomness constraints of~\eqref{eq:region} hold.

\emph{Converse: communication.} Let $\bar\Omega_k=(\Omega_j)_{j\ne k}$ as in the proof of Lemma~\ref{lem:recover}. Three facts are used.
\begin{itemize}
\item By~\eqref{eq:message}, the messages that party $k$ sends are functions of $\Omega_k$ and of the messages it receives, so $H(X_{\cE_k}\,|\,\Omega_k)$ is at most the total length of the messages that party $k$ receives; summing over $k$ counts every message once, so $L_X\ge\sum_{k}H(X_{\cE_k}\,|\,\Omega_k)$.
\item Every party recovers the sum, so $\Ws$ is a function of $(\Omega_k,X_{\cE_k})$ for every $k$.
\item The first step of the proof of Lemma~\ref{lem:recover} gives $I(\Omega_k;\bar\Omega_k\,|\,X_{\cE_k})=0$, and the messages on the other links are a function of $(\bar\Omega_k,X_{\cE_k})$, so $I(\Omega_k;\bar\Omega_k\,|\,X)=0$ for every $k$. Hence $H(\Omega_{[K]}\,|\,X)=\sum_kH(\Omega_k\,|\,X)$ by the chain rule, and, as in the recovery step of that proof, $\Ws$, which parties $1$ and $2$ both recover, is a function of $X$.
\end{itemize}
Since $H(\Omega_{[K]},X)=H(\Omega_{[K]})=\sum_kH(\Omega_k)$, the bound of the first fact is the total correlation of the final views, and since $H(\Omega_k\,|\,X_{\cE_k},\Ws)\ge H(\Omega_k\,|\,X)$,
\begin{align*}
\sum_{k\in[K]}H(X_{\cE_k}\,|\,\Omega_k)&=\sum_{k\in[K]}H(\Omega_k,X_{\cE_k})-H(\Omega_{[K]},X)\\
&\ge\sum_{k\in[K]}\big(L+H(X_{\cE_k}\,|\,\Ws)+H(\Omega_k\,|\,X)\big)-\Big(L+H(X\,|\,\Ws)+\sum_{k\in[K]}H(\Omega_k\,|\,X)\Big)\\
&=(K-1)L+\sum_{k\in[K]}H(X_{\cE_k}\,|\,\Ws)-H(X\,|\,\Ws)\\
&\overset{(a)}{\ge}(K-1)L+H(X\,|\,\Ws)\overset{(b)}{\ge}(K-1)L+\frac{K(K-1)}{K-T}L,
\end{align*}
where (a) is~\eqref{eq:shearer} of Lemma~\ref{lem:cover} with $\cF=\{\cE_k:k\in[K]\}$, in which every link lies in exactly $D=2$ sets, and (b) is the bound on $H(X\,|\,\Ws)$ of Section~\ref{sec:converse-main}. With the first fact, $R_X\ge K(K-1)/(K-T)+K-1$.

\emph{Achievability.} At each extreme point~\eqref{eq:extreme}, run the scheme of Section~\ref{sec:scheme-main}, or for $T=0$ send every input to the destination, and let the destination send $\Ws$ to every other party. This adds $(K-1)L$ symbols and no randomness. The observation of every coalition gains only $\Ws$, and since the scheme satisfies~\eqref{eq:security-sum}, which conditions on the sum, it satisfies~\eqref{eq:security-all}. Time sharing gives the rest of the region, as in Section~\ref{sec:extreme}.
\end{proof}

\subsection{Server}\label{sec:server-sub}

\begin{figure}[t]
\centering
\begin{tikzpicture}
\coordinate (s0) at (90:2.0);
\coordinate (u1) at (162:2.0); \coordinate (u2) at (234:2.0);
\coordinate (u3) at (306:2.0); \coordinate (u4) at (18:2.0);
\begin{scope}[on background layer]
  \fill[red!12,draw=red!50,thick,rounded corners=8pt] ($(s0)+(-0.75,-0.55)$) rectangle ($(s0)+(0.75,0.55)$);
  \fill[red!12,draw=red!50,thick] (u4) circle (0.58);
  \foreach \n in {1,2,3,4}{\draw[up link] (u\n)--(s0);}
  \foreach \i/\j in {1/4,2/4,3/4}{\draw[thin link] (u\i)--(u\j);}
  \foreach \i/\j in {1/2,1/3,2/3}{\draw[hon link] (u\i)--(u\j);}
\end{scope}
\foreach \n in {1,2,3,4}{\node[party] at (u\n) {$\n$};}
\node[serv] at (s0) {$0$};
\node[red!60!black,font=\small,anchor=west] at ($(s0)+(0.85,0)$) {server};
\node[red!60!black,font=\small,anchor=west] at ($(u4)+(0.65,0)$) {$\mathcal T=\{4\}$};
\end{tikzpicture}
\caption{The server model with $(K,T)=(4,1)$. Square: server; solid: links between the users; dashed: links between the users and the server; shaded: the server and a coalition $\cT$ of users, which collude together; thick: the links $\Eu[\cH]$ inside the honest set $\cH=\{1,2,3\}$.}
\label{fig:server}
\end{figure}

Let the sum be computed at a server without input, which may collude with at most $T$ of the $K$ \emph{users}, the input holders. This is close to the model of Section~\ref{sec:model} with a destination that has no input and belongs to every coalition, and the proofs carry over. The model differs from Section~\ref{sec:model} as follows. The server, indexed $0$, holds local randomness $Z_0$ of $L_{Z_0}$ i.i.d.\ uniform symbols and no input, and~\eqref{eq:indep} and~\eqref{eq:size} hold with $Z_0$ added. The secure links are $\Eu=\{\{i,j\}:i,j\in[K],\,i\ne j\}$ among the users and $\Es=\{\{0,k\}:k\in[K]\}$ between the users and the server, $\Eu[\cA]$ and $\Eu(\cA)$ are defined as in Section~\ref{sec:model} within $\Eu$, $\cE_k$ is the set of links at user $k$, including $\{0,k\}$, and messages satisfy~\eqref{eq:message} over $\{0\}\cup[K]$, with $Y_j(n)$ including the messages from the server; Figure~\ref{fig:server} shows an instance. The server recovers the sum, $H(\Ws\,|\,Z_0,Y_0(1{:}N))=0$, and the server together with at most $T$ users learns nothing beyond the sum:
\begin{equation}\label{eq:security-server}
I\big((W_k)_{k\in[K]};Y_0(1{:}N),(Y_k(1{:}N))_{k\in\cT}\,\big|\,\Ws,(W_k,Z_k)_{k\in\cT},Z_0\big)=0,\qquad\forall\cT\subseteq[K],\ |\cT|\le T.
\end{equation}
As in Section~\ref{sec:model}, the received messages in~\eqref{eq:security-server} can be replaced by $(X_{\Es},X_{\Eu(\cT)})$, all messages that the server and the users in $\cT$ send or receive. Since $Z_0$ is independent of the inputs and the users' local randomness, \eqref{eq:security-server} also implies that at most $T$ users without the server learn nothing about the inputs beyond the sum, $I((W_k)_{k\in[K]};(Y_k(1{:}N))_{k\in\cT}\,|\,\Ws,(W_k,Z_k)_{k\in\cT})=0$; they may learn the sum itself, which is harmless in federated learning, where the aggregate is returned to all users~\cite{Bonawitz17,Sun23}.
With $L_{\mathrm u}$ and $L_0$ the total lengths of the messages on $\Eu$ and on $\Es$, the \emph{user-to-user} and the \emph{server communication rates} are $R_{\mathrm u}\triangleq L_{\mathrm u}/L$ and $R_0\triangleq L_0/L$, the randomness rates are $R_{Z_k}\triangleq L_{Z_k}/L$, $k\in\{0\}\cup[K]$, and the capacity region is defined as in Section~\ref{sec:model}.

\begin{theorem}\label{thm:server}
For $K\ge2$ and $0\le T\le K-2$, the capacity region of interactive secure summation with a server is the set of rate tuples $(R_{\mathrm u},R_0,R_{Z_0},R_{Z_1},\ldots,R_{Z_K})\in\mathbb R_{\ge0}^{K+3}$ such that
\[
\begin{aligned}
R_{\mathrm u}&\ge\frac{K(K-1)}{K-T},\qquad R_0\ge1,\\
\sum_{k\in[K]\setminus\cA}R_{Z_k}&\ge\frac{(T-|\cA|)(K-|\cA|-1)}{K-T},\qquad\forall\,\cA\subseteq[K],\ |\cA|\le T.
\end{aligned}
\]
\end{theorem}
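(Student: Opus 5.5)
Here is the plan: treat the server as a destination without input that belongs to every coalition, and carry over the converse of Section~\ref{sec:converse} and the scheme of Section~\ref{sec:scheme-main}.

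\emph{Converse.} First I would re-prove Lemma~\ref{lem:recover} in the server model: for each user $k$, $H(W_k\,|\,X_{\cE_k},\Ws)=0$. The proof is unchanged, with the server among the ``other parties'' $\bar\Omega_k$. Since $\Ws$ is a function of $(Z_0,Y_0(1{:}N))$, which is a function of $(\bar\Omega_k,X_{\cE_k})$, the recovery step goes through with $\Phi=W_k$.

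\emph{Server rate.} Because $\Ws$ is a function of $(Z_0,X_{\Es})$ and $Z_0$ is independent of $\Ws$, I get $L_0\ge H(X_{\Es}\,|\,Z_0)\ge H(\Ws)=L$, so $R_0\ge1$.

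\emph{User-to-user rate.} For an honest set $\cH\subseteq[K]$ of $K-T$ users, take $\cT=[K]\setminus\cH$. The coalition $\{0\}\cup\cT$ sees $X_{\Es}$ and $X_{\Eu(\cT)}$. The coalition bound of Section~\ref{sec:converse-main} then gives
\[
H\big(X_{\Eu[\cH]}\,\big|\,X_{\Es},X_{\Eu(\cT)},\Ws\big)\ge(K-T-1)L.
\]
This uses recoverability for each $k\in\cH$, which needs only $X_{\cE_k}\subseteq(X_{\Eu},X_{\Es})$, together with~\eqref{eq:security-server}. I would then apply~\eqref{eq:shearer-dual} of Lemma~\ref{lem:cover} with $\cI=\Eu$, $Q=(X_{\Es},\Ws)$ and $D=\binom{K-2}{K-T-2}$. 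This gives $H(X_{\Eu}\,|\,X_{\Es},\Ws)\ge K(K-1)L/(K-T)$, hence the bound on $R_{\mathrm u}$.

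\emph{Randomness.} For $\cA\subseteq[K]$ with $|\cA|=M\le T$, I restrict to honest sets $\cH\subseteq[K]\setminus\cA$ and set $Q=(X_{\Es},X_{\Eu(\cA)},\Ws,W_{\cA},Z_{\cA},Z_0)$. Shearer on $\Eu[\cA^c]$ gives $H(X_{\Eu[\cA^c]}\,|\,Q)\ge(K-M)(K-M-1)L/(K-T)$. Then I write
\[
(K-M)L+\sum_{k\notin\cA}L_{Z_k}\ge H\big(X,\Ws\,\big|\,W_{\cA},Z_{\cA},Z_0\big)\ge L+H\big(X_{\Eu[\cA^c]}\,\big|\,Q\big).
\]
The first inequality holds because, conditioned on $Z_0$ and $(W_{\cA},Z_{\cA})$, all messages are functions of the remaining inputs and randomness. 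The second uses $H(\Ws\,|\,W_{\cA},Z_{\cA},Z_0)=L$ for $M<K$. This yields the constraint for $\cA$, including $\cA=\emptyset$. The point I would check carefully is that conditioning on $Z_0$ is what removes the server's randomness from the bound. That is also why $R_{Z_0}$ is unconstrained, consistent with the server needing none.

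\emph{Achievability.} By Lemma~\ref{lem:extreme} it suffices to achieve each extreme point on the users together with $R_{\mathrm u}=K(K-1)/(K-T)$, $R_0=1$ and $R_{Z_0}=0$. Run phase 1 of Section~\ref{sec:scheme-main} among the users, choosing suppliers $p_t$ and collectors $c_\ell$ among the users, with $L=K-T$. In round 2, every user $k\ne c_\ell$ sends $\tilde W_k(\ell)$ to $c_\ell$, so the users play the role of the parties other than the destination. In round 3, each collector sends the full sum $\sum_{k\in[K]}\tilde W_k(\ell)=\Ws(\ell)$ to the server.

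Counting per link: each user–user link carries $2$ symbols, as in the rows of Table~\ref{tab:links} without the destination, which gives $K(K-1)$ symbols in total. The server links carry $K-T=L$ symbols, so $R_0=1$.

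For security, the user part of a coalition $\{0\}\cup\cT$ is analyzed exactly as in Section~\ref{sec:scheme-main} with $k_0$ a fixed honest user. The server's observation is $\Ws$ itself, which lies in the conditioning. Hence the counting and rank argument applies verbatim, with $|\cU|-|\mathcal J|=T-|\cT|\ge0$.

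I expect the main obstacle to be checking that round~2 is now complete among users and still costs exactly $2$ per link. There is no longer a destination exempt from sending. With $K$ users, $K-T$ collectors and $T$ suppliers, supplier–collector links carry a share plus $\tilde W_{p_t}(\ell)$, and collector–collector links carry two noisy symbols, so the count does work out. The other point to verify is that omitting $k_0$ still gives full rank when $\cT$ contains no destination.
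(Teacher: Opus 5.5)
Your converse is the paper's. It uses recoverability with the server as destination, the coalition bound for $\{0\}\cup\cT$, the complement form of Shearer's lemma on $\Eu$, the same $Q=(X_{\Es},X_{\Eu(\cA)},\Ws,W_{\cA},Z_{\cA},Z_0)$ for the randomness constraints, and $L_0\ge H(X_{\Es}\,|\,Z_0)\ge L$. Leaving $Z_0$ out of the conditioning in the $R_{\mathrm u}$ bound is harmless: you add it to invoke \eqref{eq:security-server} and drop it afterwards, just as $(W_{\cT},Z_{\cT})$ are dropped.

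Your achievability is a genuinely different scheme. The paper does not modify the scheme of Section~\ref{sec:scheme-main}. It runs it verbatim among the users with user $K$ as destination and appends one forwarding step, user $K$ sending $\Ws$ to the server. Security is then a one-line reduction: the only extra observation of the server is $\Ws$, and that scheme already satisfies \eqref{eq:security-sum}, which conditions on the sum. You instead hand the destination's role to the server. Every user, including the former destination, sends its noisy symbols to the collectors, and each collector delivers $\Ws(\ell)$ to the server. Your link count is correct.

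Since the user-side protocol is no longer the one already proved secure, the counting and rank argument must be re-run. It does go through, and the worry you raise at the end does not arise. Every coalition contains the server and so knows $\Ws$, which puts you always in the paper's case of a coalition containing the destination. There $k_0$ may be any honest user, and $|\cU|-|\mathcal J|=T-|\cT|\ge0$ gives full row rank as before.

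Your route buys a scheme symmetric in the users: no user is singled out to relay the whole sum, and the server load is spread as one symbol per collector link. It also treats $T=0$ without a separate case, since $\boldsymbol\Gamma$ then has no columns and the scheme is plain routing. You should state this explicitly, because Lemma~\ref{lem:extreme} assumes $T\ge1$. The paper's route buys a security proof that needs no new verification.
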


The user-to-user communication and the randomness of the users are as in Theorem~\ref{thm:main}, the server receives only the sum, and the randomness of the server is not constrained: the server belongs to every coalition, so its randomness never helps. In particular, the optimal total randomness rate $T(K-1)/(K-T)$ and the optimal server communication rate $1$ are smaller by $K-1$ each than the optimal rates of the one-hop model of~\cite{YZWT26}, in which the users generate correlated randomness through input-independent communication and then each user sends one message directly to the server.

\begin{proof}
\emph{Converse.} The proof of Section~\ref{sec:converse-main} applies with $Z_0$ and $X_{\Es}$ added to every conditioning, since the server knows them in every coalition. The proof of Lemma~\ref{lem:recover} applies with the server as the destination, which is never the user $k$, so $H(W_k\,|\,X_{\cE_k},\Ws)=0$ for every user $k$. With~\eqref{eq:security-server}, the coalition bound becomes $H(X_{\Eu[\cH]}\,|\,X_{\Es},X_{\Eu(\cT)},\Ws,Z_0)\ge(K-T-1)L$, the covering gives $H(X_{\Eu}\,|\,X_{\Es},\Ws,Z_0)\ge K(K-1)L/(K-T)$, hence the bound on $R_{\mathrm u}$, and the randomness outside every set $\cA$ of $M\le T$ users, including $\cA=\emptyset$, follows with $Q=(X_{\Es},X_{\Eu(\cA)},\Ws,W_{\cA},Z_{\cA},Z_0)$, where $H((W_k,Z_k)_{k\notin\cA}\,|\,W_{\cA},Z_{\cA},Z_0)=(K-M)L+\sum_{k\notin\cA}L_{Z_k}$ does not contain $L_{Z_0}$. Finally, the server recovers the sum, so $L_0\ge H(X_{\Es}\,|\,Z_0)\ge H(\Ws\,|\,Z_0)=L$.

\emph{Achievability.} At an extreme point of the polyhedron of randomness rates of the users, with user $K$ as the destination, the users run the scheme of Section~\ref{sec:scheme-main}, or for $T=0$ send their inputs to user $K$, and user $K$ sends $\Ws$ to the server; the server has no randomness. The server and a coalition $\cT$ observe what $\cT$ observes in that scheme, together with $\Ws$, and the scheme satisfies~\eqref{eq:security-sum}, which conditions on the sum, so~\eqref{eq:security-server} holds. The rates are $R_{\mathrm u}=K(K-1)/(K-T)$, $R_0=1$ and the extreme point, and time sharing gives the rest of the region.
\end{proof}

\section{Conclusion}\label{sec:conclusion}

We characterized the capacity region of interactive secure summation, where every pair of parties is connected by a secure link and each party has only independent local randomness, for all $K$ and $0\le T\le K-2$, with the communication and the randomness of every party, together with two generalizations. Table~\ref{tab:summary} collects the results. In each model the converse holds for all interactive protocols, and the optimal rates are achieved simultaneously by linear schemes.

\begin{table}[H]
\centering
\small\renewcommand{\arraystretch}{1.3}\setlength{\tabcolsep}{10pt}
\begin{tabular}{lll}
\toprule
Model & Communication & Randomness\\
\midrule
One destination (Thm.~\ref{thm:main}) & $R_X\ge K(K-1)/(K-T)$ & $\sum_{k\notin\cA}R_{Z_k}\ge\frac{(T-|\cA|)(K-|\cA|-1)}{K-T}$\\
\addlinespace
Every party (Thm.~\ref{thm:all}) & $R_X\ge K(K-1)/(K-T)+K-1$ & as for one destination\\
\addlinespace
Server (Thm.~\ref{thm:server}) & $R_{\mathrm u}\ge K(K-1)/(K-T)$ & as for one destination, users only\\
 & $R_0\ge1$ & \\
\bottomrule
\end{tabular}
\caption{Capacity regions for $0\le T\le K-2$; $\cA$ ranges over the sets of at most $T$ parties, users in the server model.}\label{tab:summary}
\end{table}

\appendix

\section{Relation to Results in Theoretical Computer Science}\label{app:tcs}

The results on private computation of the sum in theoretical computer science mostly concern one input bit or one group element per party and count messages and random bits for one computation, whereas Theorem~\ref{thm:main} concerns the cost per symbol of the sum as $L$ grows, so the two are not directly comparable. In the two extreme cases, the scalar values equal the optimal rates: for $T=1$, $K$ messages~\cite{CK93} and one random bit~\cite{KR98}, and for $T=K-2$, $\binom K2$ messages~\cite{CK93} and $\binom{K-1}2$ random bits~\cite{BDPV99}. Hence the scalar schemes are optimal there also for long inputs. A fair comparison is with the results for many instances computed together, which we discuss below. These results give the order of growth in $K$ and $T$ within their parameter regimes, while Theorem~\ref{thm:main} gives the exact optimal rates for every $K$ and $T$; in this sense the two study different regimes.

\emph{Communication.} Franklin and Yung~\cite{FY92} show for $K>2T+1$ and $T$ proportional to $K$ that one binary addition requires a number of bits of order $K^2$ while $K-T$ additions in parallel can be computed with $O(K^2\log K)$ bits, by packing several secrets into one polynomial. Over a binary base field with $B=\lceil\log_2K\rceil$, a scheme of Section~\ref{sec:scheme-main} computes $(K-T)B$ binary additions with $K(K-1)B$ bits, that is, $K(K-1)$ bits per $K-T$ binary additions, and with every party recovering the sum (Theorem~\ref{thm:all}) $(K-1)(2K-T)$ bits, both $O(K^2)$; the $\log K$ factor of~\cite{FY92} comes from placing one bit in each symbol of a field of size about $K$.

\emph{Randomness.} Goyal, Ishai and Song~\cite{GIS22} show how to amortize randomness over several additions. Their batch construction uses distinct parties to supply the input strings of an extractor, so its batch size is limited by the number of parties, and its cost keeps a batch overhead before reaching the amortized order $O(T\log^2K)$ per addition. With every party recovering the sum, Theorem~\ref{thm:all} gives the optimal total randomness rate $T(K-1)/(K-T)$ for every $K$ and $T$, which is asymptotic to $T$ as $T/K\to0$. Their randomness measure is the number of random bits, which over a binary base field is the number of randomness symbols counted here; the schemes of Section~\ref{sec:scheme-main} over a binary base field use exactly $T(K-1)B$ random bits for $(K-T)B$ binary additions. For $1$-private XOR of $L$ instances, Kushilevitz, Ostrovsky and Ros\'en~\cite{KOR03} give the lower bound $(1-2/K)L$ random bits; Theorem~\ref{thm:all} with $T=1$ and $q=2$ gives the optimal total randomness rate $T(K-1)/(K-T)=1$, that is, the larger lower bound of $L$ random bits, which the schemes of Section~\ref{sec:scheme-main} achieve.\footnote{Their model is slightly different, for example, the number of random bits that each party uses may depend on the execution, with at most $r$ in total; our converse also applies to it, since all the messages can then be generated from the inputs and $r$ independent uniform bits.}

\bibliography{refs}

\end{document}